\tikzstyle{startstop} = [rectangle, rounded corners, minimum width=3cm, minimum height=1cm,text centered, draw=black]
\tikzstyle{io} = [trapezium, trapezium left angle=70, trapezium right angle=110, minimum width=3cm, minimum height=1cm, text centered, draw=black, fill=blue!30]
\tikzstyle{dot}=[\ldots]
\tikzstyle{arrow} = [thick,->,>=stealth]
\newcolumntype{L}{D{.}{.}{2,12}}
\numberwithin{equation}{section}
\def\R{\mathbb{R}}
\def\bY{\bm{Y}}
\def\bX{\bm{X}}
\def\bC{\bm{C}}
\renewcommand{\vec}[1]{\mathbf{1}}
\DeclareMathOperator*{\argmin}{arg\min}
\renewcommand{\P}{\mathrm{P}}
\newcommand{\E}{\mathbb{E}}
\newcommand{\prob}{\mathbb{P}}
\newcommand{\btheta}{\bm{\theta}}
\newcommand{\normal}{\mathrm{N}}
\newcommand{\norm}[1]{\left\| #1 \right\|}
\newcolumntype{d}[1]{D{.}{.}{#1}}
\newtheoremstyle{general}
{3mm} 
{3mm} 
{\it} 
{} 
{\bfseries} 
{.} 
{.5em} 
{} 
\theoremstyle{general}
\newtheorem{lemma}{Lemma}
\newtheorem{theorem}{Theorem}
\newtheorem{corollary}{Corollary}
\newtheorem{assumption}{Assumption}
\newtheorem{remark}{Remark}
\renewenvironment{proof}[1][\proofname]{\par
    \pushQED{\qed}%
    \normalfont \topsep6\p@\@plus6\p@\relax
    \trivlist
    \item\relax{
        \bfseries
        #1\@addpunct{.}}\hspace\labelsep\ignorespaces
    }{%
     \popQED\endtrivlist\@endpefalse
     }
\title{Coverage of Credible Sets for Regression under Variable Selection}
\author{Samhita Pal, Subhashis Ghosal}\author{Samhita Pal\thanks{
    spal4@ncsu.edu}\hspace{.2cm}\\
    Department of Statistics, North Carolina State University\\
    and \\
    Subhashis Ghosal \\
    Department of Statistics, North Carolina State University}
\date{}
\begin{document}

\maketitle

\begin{abstract}
We study the asymptotic frequentist coverage of credible sets based on a novel Bayesian approach for a multiple linear regression model under variable selection. We initially ignore the issue of variable selection, which allows us to put a conjugate normal prior on the coefficient vector. The variable selection step is incorporated directly in the posterior through a sparsity-inducing map and uses the induced prior for making an inference instead of the natural conjugate posterior. The sparsity-inducing map minimizes the sum of the squared $\ell_2$-distance weighted by the data matrix and a suitably scaled $\ell_1$-penalty term. We obtain the limiting coverage of various credible regions and demonstrate that a modified credible interval for a component has the exact asymptotic frequentist coverage if the corresponding predictor is asymptotically uncorrelated with other predictors. Through extensive simulation, we provide a guideline for choosing the penalty parameter as a function of the credibility level appropriate for the corresponding coverage. We also show finite-sample numerical results that support the conclusions from the asymptotic theory. We also provide the \texttt{credInt} package that implements the method in \texttt{R} to obtain the credible intervals along with the posterior samples.
\end{abstract}

\section{Introduction}

The multiple linear regression model is one of the most useful tools for analyzing data, expressing an overall relationship between a response variable and a set of predictors as an affine function. Not all listed predictor variables are often active in the regression, prompting the need to select the relevant predictors for a better interpretable model and more precise estimation and prediction. The issue of variable selection is especially vital for data consisting of many predictors when a meaningful inference is possible only by incorporating such a step in the analysis, but even for a fixed-dimensional setting, a variable selection step is highly desirable. This leads us to the model selection problem, which has been addressed by both Bayesian and non-Bayesian methods. Classical methods of model selection include forward selection and backward selection. The former begins with the null model and keeps adding predictors in the model sequentially as long as the predictive power goes up significantly, while the latter begins with the full model and sequentially removes unneeded until a stopping criterion is met. Another commonly employed approach involves penalization, which adds a penalty function to the objective function. This encourages the minimizer to produce solutions with greater sparsity than a minimizer without a penalty. The most notable such method is the  
LASSO \citep{tibshirani1996regression}, which penalizes the $\ell_1$-norm of the coefficient vectors to obtain a sparse estimator. Asymptotic properties such as consistency and the limiting distribution of the LASSO were studied in \cite{fu2000asymptotics} in the fixed dimensional setting. If the dimension of the model is large, possibly larger than the sample size, additional compatibility conditions are needed on the predictors and the true value of the vector of regression coefficients to ensure identifiability. For a detailed account of convergence and selection properties of LASSO-type estimators, see \cite{buhlmann2011statistics}. 

In the Bayesian domain, the emphasis has been on formulating a prior distribution that encourages sparsity in the posterior model. The approach of a spike-and-slab prior  \citep{mitchell1988bayesian, ishwaran2005spike} combines a point mass or an approximation to it (spike) at zero and a thick-tailed continuous distribution (slab). Gibbs sampling and a data augmentation technique are used to compute the posterior distribution; see \cite{george1993variable}. A typically faster alternative is to adopt a continuous-shrinkage prior such as the horseshoe prior \citep{carvalho2010horseshoe}. It incorporates a single density function to emulate the concentration at zero and the thickness of the tail in a spike-and-slab prior, simplifying the model by using a one-component approach. Asymptotic concentration and selection properties of the posterior based on a spike-and-slab prior were established in \cite{castillo2015bayesian} and analogous properties for a continuous shrinkage prior in \cite{song2017nearly}. Variational inference for the variable selection problem was recently considered by some authors as an approximate Bayesian inference method providing significantly faster computation; see \cite{ray2022variational, ormerod2017variational,huang2016variational,mukherjee2022variational,yang2020alpha} Concentration properties for estimation and selection consistency properties were studied by \cite{zhang2020convergence, han2019statistical,bai2020nearly}.  \cite{yang2020variational} showed asymptotic frequentist coverage of a credible ball assuming an orthogonal design matrix. 

While penalization methods for variable selection like the LASSO are hugely popular and possess desirable convergence and model selection properties, they do not naturally quantify uncertainty. Even in the fixed dimensional case, because of the bias and degeneracy of the LASSO, the limiting distribution obtained in \cite{fu2000asymptotics} cannot be used to construct a confidence region as the limiting distribution is dependent on the sparsity structure of the true regression coefficients. 
By applying a debiasing technique on the LASSO,  \citep{zhang2014confidence, van2014asymptotically, javanmard2018debiasing} constructed confidence bands, but at the expense of losing sparsity. Modifying the idea of the bootstrap LASSO in \cite{fu2000asymptotics}, \cite{chatterjee2011bootstrapping} showed that the resulting confidence region has good coverage in the fixed-dimensional setting. Confidence regions for the regression vector in the high-dimensional linear regression setting under sparsity appear to have been studied only in \cite{nickl2013confidence} and \cite{cai2017confidence}. Bayesian methods automatically provide a measurement of uncertainty through the posterior distribution. However, the Bayesian way of quantifying the uncertainty may not match a frequentist assessment because these two use very different notions of randomness. For smooth fixed-dimensional parametric families, the Bernstein-von Mises theorem ensures the approximate agreement between Bayesian and frequentist measures of uncertainty, but this may not carry over to non-parametric and other settings; see \cite{cox1993analysis}. In smoothing regimes, asymptotic coverage may be assured only by modifications such as undersmoothing or inflating a credible ball; see  
\cite{knapik2011bayesian}, \cite{castillo2013nonparametric}, \cite{szabo2015frequentist}, \cite{yoo2016supremum}, \cite{ray2017adaptive}, \cite{sniekers2015adaptive}, and others. 
For a high-dimensional setting, the problem of frequentist validation of Bayesian uncertainty quantification was addressed only by \cite{van2017uncertainty}, \cite{belitser2019general},  \cite{belitser2020empirical, changepoint}. 

A novel and convenient Bayesian approach to inference on a restricted subspace through the induced distribution by a restriction-complying ``immersion'' map, which can often be a projection, from a simple unrestricted posterior typically obtained through conjugacy.  \cite{lin2014bayesian} and \cite{chakraborty2021convergence, chakraborty2021coverage, chakraborty2022rates} used the technique for inference on monotone functions;  \cite{bhaumik2015bayesian, bhaumik2017efficient, bhaumik2022two} for differential equation models, and \cite{wang2022coverage, wang2023posterior, wang2023bayesian} for problems on multivariate monotonicity. 
 The corresponding ``projection-posterior'', or a more general ``immersion-posterior'', can be easily computed by conjugate posterior sampling, typically followed by a simple optimization step. For differential equation models, a Bernstein-von Mises theorem holds, justifying the uncertainty quantification of the projection-posterior from the frequentist principle. For shape-restricted models, the asymptotic coverage is slightly higher than the credibility level, the opposite of the phenomenon observed in  \cite{cox1993analysis} for smoothing problems. Interestingly, the relationship between the limiting coverage and credibility does not depend on the true function. Hence, a target exact asymptotic coverage can be obtained by starting from a predetermined lower level (\cite{chakraborty2021coverage,chakraborty2022rates}, \cite{wang2022coverage,wang2023bayesian}). 
 In a linear regression model with variable selection, the need to address sparsity in the prior makes posterior distributions harder to compute and analyze theoretically. This is particularly apparent because results on coverage of Bayesian credible regions are largely absent from the literature. The problem can be alleviated by using an immersion posterior to make an inference. In the absence of sparsity, a multivariate normal prior is conjugate for the linear regression problem, leading to an explicit expression for the posterior distribution. An immersion map given by the minimizer of the sum of a weighted squared Euclidean distance and an $\ell_1$-penalty induces sparsity in the solution and is appropriate for the problem.

This paper studies the construction and frequentist coverage properties of credible sets for the coefficient vector using the sparse projection-posterior method under the fixed dimension setting. We derive a weak limit of the immersion-posterior distribution with an interesting structural similarity with the distributional limit of the LASSO derived in \cite{fu2000asymptotics}. Using the weak limit, we derive an expression for the limiting coverage of a Bayesian credible ball. When a predictor variable is asymptotically uncorrelated with the remaining predictors, we simplify the expression for the coverage and compare it with the credibility level. While the limiting coverage is lower than the credibility level if the coefficient is non-zero, we can identify a higher credibility level, depending on the tuning parameter, such that the limiting coverage coincides with the targeted coverage. The corresponding limiting coverage is higher if the parameter assumes zero value, provided that the intended coverage level is sufficiently high. Apart from reconciling the Bayesian and frequentist notions of uncertainty quantification as in the Bernsterin-von Mises theorem for smooth fixed-dimensional parametric families, this result provides an automatic tuning mechanism. 

The rest of the paper is organized as follows. The setup and the methodology are formally described in the next section. In \Cref{main results}, we derive the limiting results for establishing asymptotic coverage of the credible balls, stating the necessary assumptions and confirming their pragmatism. \Cref{simulation} shows finite-sample results of the proven asymptotic guarantees and provides a table for guidelines for choosing the correct penalty parameter. Finally, in \Cref{conclusion}, we discuss the contribution of this work and the future directions, followed by the proofs in the appendix.

\section{Setup and Methodology}
\label{methodology}

We consider the problem of model selection and uncertainty quantification of the chosen variables in the context of the linear regression model 
\begin{align}
\label{linear model}
    \bY = \bX\btheta + \bm{\varepsilon}, \quad \bm{\varepsilon} \sim \normal_n(\bm{0},\sigma^2\bm{I}_n),
\end{align} 
where $\bY \in \R^n$ is the response vector observed for $n$ samples, $\bX \in \R^{n \times p}$ is a deterministic design matrix and $\btheta \in \R^p$ is the vector of regression coefficients. Regarding the methodology, the deterministic or random nature of the predictor variables is immaterial, but it matters in the study of asymptotic properties. If the predictors are random, the conditions imposed on the predictor variables will have to be satisfied with probability tending to one. 

Throughout the paper, the number of explanatory variables $p$ in this model is assumed to be fixed and smaller than the sample size $n$. A smaller number $s$ of these $p$ predictors are active, but $s$ and the active predictor indices are unknown. Although the regression vector may be estimated at the parametric rate $n^{-1/2}$, identifying the model consisting only of the active predictors is desirable since it gives a more interpretable model. Let $S_0$ consist of the indices corresponding to active predictor variables in the model. 
We assume that the data $\bY=(Y_1,\ldots,Y_n)^{\mathrm{T}}$ is generated through the process 
\begin{align}
\label{true distribution}
    Y_i = \bX_i^{\mathrm{T}}\btheta^0 + \varepsilon_i, \quad \varepsilon_1,\ldots,\varepsilon_n  \mbox{ are i.i.d. with mean 0 and variance } \sigma_0^2,
\end{align} 
where $\bX_i^{\mathrm{T}}$ stands for the $i$th row of $\bX$ (i.e., the vector of covariates for the $i$th observation). Thus, the true data generating process follows the linear 
model \eqref{linear model} with the true value of the regression coefficients $\btheta^0$ and the true value of the error standard deviation $\sigma_0$, except that the error distribution need not be normal. 

Throughout the paper, we will use the following notations. Probability under the true model will be denoted by $\mathbb{P}_0$ and the corresponding expectation by $\mathbb{E}_0$. 
The normalized cross-products matrix is given by $\bC_n = n^{-1}{\bm{X}^{\mathrm{T}}\bm{X}}$. For notational convenience in theoretical analysis, we shall place the $s_0$ active predictors corresponding to non-zero regression coefficients at the beginning of the arrangement: let $\bX_{(1)}$ stand for the submatrix of the data matrix $\bX$ formed by the active predictors and $\bX_{(2)}$ stand for the submatrix corresponding to the remaining $p-s_0$ variables with zero true coefficients. Then, the matrix $\bC_n$ can be split into block matrices  
\begin{align}
\label{sample covariance}
 \begin{bmatrix}
\bC_{n(11)} & \bC_{n(12)} \\
\bC_{n(21)} & \bC_{n(22)} 
\end{bmatrix} = \frac{1}{n} \begin{bmatrix}
\bX_{(1)}^{\mathrm{T}} \bX_{(1)} & \bX_{(1)}^{\mathrm{T}} \bX_{(2)} \\
\bX_{(2)}^{\mathrm{T}} \bX_{(1)} & \bX_{(2)}^{\mathrm{T}} \bX_{(2)} 
\end{bmatrix}. 
\end{align}
 The $q$-norm of a vector $\bm{a} \in \R^d$ is given by
 $\norm{\bm{a}}_q = (\sum_{i = 1}^{d} |a_i|^q)^{1/q}$. For $q=2$, this corresponds to the Euclidean norm, which is denoted by $\|\cdot\|$. For a matrix $\bm{M} \in \R^{d_1\times d_2}$, we denote by  $\bm{M}_A$ the submatrix of containing the columns corresponding to a given set $A \subset \{1,2,\dots,d_2\}$. We denote the zero vector by $\bm{0}$ and the identity matrix of order $r$ by $\bm{I}_r$. We use the $\rightsquigarrow$ notation to denote weak convergence. For a sequence $a_n$ and positive sequence $b_n$, let $a_n =\mathcal{O}(b_n)$ mean that $|a_n|\le C b_n$ for some constant $C>0$. For a sample space $\mathfrak{X}$, let $\mathfrak{M}(\mathfrak{X})$ stand for the space of all probability measures on $\mathfrak{X}$. The distributional law of a random element $T$ will be denoted by $\mathcal{L}(T)$. The symbol $\mathrm{P}$ refers to a probability statement with respect to a generic probability distribution. Let $\delta_0$ denote the Dirac delta measure at the point $0$. We shall write $\Phi$ for the standard normal cumulative distribution function and $z_\alpha=\Phi^{-1}(1-\alpha)$ for the $(1-\alpha)$-quantile of it. 
 
The proposed immersion-posterior methodology for the variable selection problem differs from the classical Bayesian approach of putting a prior taking sparsity into account and then updating to the posterior distribution that puts weights on different models corresponding to different sets of selected predictors. In the proposed approach, we put a conditionally conjugate normal prior on the parameter vector given, disregarding the issue of variable selection at first, and then a conjugate inverse gamma prior for $\sigma^2$: 
Further, we put an inverse gamma prior for $\sigma^2$, that is, 
\begin{align}
\label{prior}
\btheta|\sigma \sim \normal_p(\textbf{0}, \sigma^2 a_n^{-1} \bm{I}_p), \qquad  
    \sigma^{-2}\sim \mathrm{Ga}(b_1,b_2)
\end{align}
for some sequence $a_n>0$ and constants $b_1,b_2>0$. The non-informative choice $b_1=b_2=0$, which corresponds to the density $\sigma^{-1}$ for $\sigma$, may also be used. The choices of $b_1$ and $b_2$ will not matter for the intended asymptotic study in this paper. 

This results in the ``unrestricted'' posterior distribution given by 
\begin{align}
\label{pos dist}
    \btheta|(\bY,\sigma)\sim  \normal_{p} \big(\big(\bX^{\mathrm{T}} \bX + a_n\bm{I}_{p}\big)^{-1} \bX^{\mathrm{T}} \bY, \sigma^2 \big(\bX^{\mathrm{T}} \bX + a_n\bm{I}_{p}\big)^{-1}\big).
\end{align} 
Further, the marginal posterior for $\sigma$ is given by 
\begin{align}
\label{posterior sigma}
    \sigma^{-2}|\bY\sim  \mathrm{Ga} \big( b_1+\frac{n}{2},  b_2+ \frac{1}{2} \bY^{\mathrm{T}} (\bm{I}_n-\bX(\bX^{\mathrm{T}} \bX+a_n \bm{I}_p)^{-1} \bX^{\mathrm{T}})  \bY   \big).
\end{align} 

We denote the posterior mean $\big(\bX^{\mathrm{T}} \bX + a_n\bm{I}_{p}\big)^{-1} \bX^{\mathrm{T}} \bY$ by $\hat{\btheta}^\mathrm{R}$ owing to its interpretation as the ridge-regression estimator. The posterior distribution will be denoted by $\Pi(\cdot|\bY)$. 

Note that the posterior distribution disregards variable selection because the prior does not introduce such a mechanism. The issue is then addressed by ``correcting'' the unrestricted posterior through an immersion map that transforms dense vectors into sparse vectors, in that the distribution induced by this map from the unrestricted posterior is used for inference. We choose the sparsity-inducing immersion map given by 
\begin{align}
\label{sparse_proj}
\iota:\btheta \mapsto \btheta^* \coloneqq \argmin_{\bm{u}} \{ n^{-1} \norm{\bX \btheta - \bX \bm{u}}^2 + \lambda_n \norm{\bm{u}}_1\}.
\end{align}
We shall call the map $\iota$ the sparse-projection operator. It depends on the choice of the tuning parameter $\lambda_n$.   
The map is motivated by the LASSO in that if $\btheta=\hat\btheta$, the least square estimator, then $\btheta^*$ is the LASSO estimator $\hat\btheta^{\mathrm{L}}$ defined by 
\begin{align}
\hat\btheta^{\mathrm{L}} &= \argmin_{\bm{u}} \{ n^{-1} \norm{\bY - \bX \bm{u}}^2 + \lambda_n \norm{\bm{u}}_1\}\nonumber \\
& =\argmin_{\bm{u}} \{ n^{-1} \|\bX \hat{\btheta} - \bX \bm{u}\|^2 + \lambda_n \norm{\bm{u}}_1\}.
\label{lasso}
\end{align}
The distribution induced from the unrestricted posterior by the map $\iota$ is an immersion posterior distribution in the terminology of \cite{wang2022coverage}. In our context, it will be called the sparse-projection posterior distribution and will be denoted by $\Pi^*(\cdot|\bY)$, that is, $\Pi^*(B|\bY)=\Pi(\btheta^*\in B|\bY)$ for $B\subset \mathbb{R}^p$. 

The immersion posterior is easy to compute by sampling.  
Having obtained a dense posterior sample $\btheta$ from \eqref{pos dist}, we compute $\btheta^*=\iota(\btheta)$ and record that as a sample from the immersion posterior distribution. We repeat the operation independently sufficiently many times so that posterior probabilities and expectations can be reliably calculated from sampling. In particular, the method also computes model posterior probabilities. 
As the method does not use any Markov chain Monte Carlo (MCMC) method, it promises faster computation. The method is also amenable to parallelization since both $\bX^{\mathrm{T}} \bX$ and $\bX^{\mathrm{T}} \bY$ can be computed by dividing the dataset into several parts and computing the sum of products at different machines, followed by an aggregation step at the central server. This feature is especially useful if the sample size is huge. 

It may be noted that the immersion posterior method may also be used in conjugation with other sparsity-inducing operators instead of \eqref{sparse_proj}. Other penalties that may be used to define an immersion posterior include those appearing in defining estimators alternative to the LASSO such as the Minimax Concave Penalty (MCP) \citep{zhang2010nearly}, the Smoothly Clipped Absolute Deviation (SCAD) \citep{fan2001variable}, the Dantzig selector \citep{candes2007dantzig}, the adaptive LASSO \citep{zou2006adaptive}, the non-negative garrotte estimator \citep{breiman1995better, yuan2007non} among others. In the present paper, we forgo 
other possible penalty functions and 
study the asymptotic coverage posterior credible regions corresponding only to the immersion map \eqref{sparse_proj}.

\section{Main Results}
\label{main results}

\setcounter{equation}{0} 

Let the predictor dimension $p$ be fixed. We make the following assumptions: 

\begin{assumption}[Predictor]
\label{predictor}
The matrix $\bC_n\to \bC$ for some positive definite matrix $\bC$ and $n^{-1}\max\{\|\bX_i\|^2:i=1,\ldots,n\}\to 0$.
\end{assumption}

\begin{assumption}[Tuning]
\label{tuning}
The tuning parameter $\lambda_n$ satisfies $\lambda_n \sqrt{n}\to \lambda_0$ for some $\lambda_0\ge 0$.
\end{assumption}

Under the above conditions, \cite{fu2000asymptotics} showed that the LASSO estimator $\hat{\boldsymbol\theta}^\mathrm{L}$ is weakly consistent and the limiting distribution of $\sqrt{n}(\hat\btheta^{\mathrm{L}}-\btheta^0)$ puts a positive probability at zero for the components corresponding to the irrelevant variables. 

\begin{theorem}[\cite{fu2000asymptotics}]
Let $p$ be fixed, and Assumptions~\ref{predictor} and \ref{tuning} hold. Then  $\hat{\bm\xi}_n:=\sqrt{n}(\hat{\btheta}^\mathrm{L} - \btheta^0)\rightsquigarrow \bm{\xi}$, where  
\begin{align}
\label{lasso_lim_dist}
    \bm\xi = \argmin_{\bm{v}\in \R^p} \{\bm{v}^\mathrm{T}\bC\bm{v} - 2\sigma_0\bm{v}^\mathrm{T}\bC^{1/2}\bm{\Delta} + \lambda_0  \big[ \sum_{j=1}^{s_0} v_{j}\mathrm{sign}(\theta^0_{j}) + \sum_{j=s_0+1}^{p} |v_j| \big]\},
\end{align} 
with $\bm\Delta \sim \normal_p(\bm{0},\bm{I}_p)$.
\end{theorem}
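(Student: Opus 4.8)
The plan is to recast the statement as a convex M-estimation limit and use the epi-convergence (``convexity lemma'') argument, as in \cite{fu2000asymptotics}. Write $\boldeps = \bY - \bX\btheta^0$ and reparametrize $\bm{u} = \btheta^0 + n^{-1/2}\bm{v}$, so that $\hat{\bm\xi}_n = \sqrt{n}(\hat\btheta^{\mathrm{L}} - \btheta^0)$ is exactly the minimizer over $\bm{v}\in\R^p$ of the rescaled, recentered criterion
\[
Z_n(\bm{v}) \;=\; \norm{\boldeps - n^{-1/2}\bX\bm{v}}^2 - \norm{\boldeps}^2 \;+\; n\lambda_n\big(\norm{\btheta^0 + n^{-1/2}\bm{v}}_1 - \norm{\btheta^0}_1\big).
\]
Expanding the square gives $Z_n(\bm{v}) = \bm{v}^{\mathrm{T}}\bC_n\bm{v} - 2 n^{-1/2}\bm{v}^{\mathrm{T}}\bX^{\mathrm{T}}\boldeps + (\sqrt{n}\lambda_n)\cdot\sqrt{n}\big(\norm{\btheta^0 + n^{-1/2}\bm{v}}_1 - \norm{\btheta^0}_1\big)$, which cleanly isolates a deterministic quadratic form, a random linear term, and a penalty term.

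I would then handle the three terms separately. By Assumption~\ref{predictor}, $\bC_n\to\bC$, so the quadratic form converges to $\bm{v}^{\mathrm{T}}\bC\bm{v}$. For the linear term, $n^{-1/2}\bX^{\mathrm{T}}\boldeps = n^{-1/2}\sum_{i=1}^n\bX_i\varepsilon_i$ is a sum of independent mean-zero vectors with total covariance $\sigma_0^2\bC_n\to\sigma_0^2\bC$; the Lindeberg condition holds because $n^{-1}\max_i\norm{\bX_i}^2\to 0$ (the second half of Assumption~\ref{predictor}) and $\E_0\varepsilon_1^2 = \sigma_0^2 < \infty$, so by the multivariate Lindeberg--Feller CLT, $n^{-1/2}\bX^{\mathrm{T}}\boldeps\rightsquigarrow\normal_p(\bm{0},\sigma_0^2\bC)$, which we represent as $\sigma_0\bC^{1/2}\bm\Delta$ with $\bm\Delta\sim\normal_p(\bm{0},\bm{I}_p)$. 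For the penalty term, for each fixed $\bm{v}$ and all $n$ large enough, $\sqrt{n}(|\theta^0_j + n^{-1/2}v_j| - |\theta^0_j|)$ equals $v_j\,\mathrm{sign}(\theta^0_j)$ when $\theta^0_j\neq 0$ and equals $|v_j|$ when $\theta^0_j = 0$; together with $\sqrt{n}\lambda_n\to\lambda_0$ (Assumption~\ref{tuning}), the penalty term converges to $\lambda_0\big[\sum_{j=1}^{s_0} v_j\,\mathrm{sign}(\theta^0_j) + \sum_{j=s_0+1}^p |v_j|\big]$. Since all three pieces are continuous functions of the single random vector $n^{-1/2}\bX^{\mathrm{T}}\boldeps$ (or deterministic), this yields the joint weak convergence $(Z_n(\bm{v}_1),\dots,Z_n(\bm{v}_k))\rightsquigarrow(V(\bm{v}_1),\dots,V(\bm{v}_k))$ for any finite collection, where $V$ is precisely the objective in \eqref{lasso_lim_dist}.

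Finally I would transfer this to the minimizers. Each $Z_n$ is convex in $\bm{v}$; the limit $V$ is strictly convex since $\bC$ is positive definite, hence $V$ has a unique finite minimizer $\bm\xi$, and coercivity of $\bm{v}^{\mathrm{T}}\bC\bm{v}$ keeps the $\hat{\bm\xi}_n$ from drifting off to infinity. Invoking the standard convexity lemma for minimizers of convex stochastic processes — finite-dimensional weak convergence of convex functions to a limit with a unique argmin implies weak convergence of the argmins (the device used in \cite{fu2000asymptotics}) — gives $\hat{\bm\xi}_n\rightsquigarrow\bm\xi$. The main obstacle is exactly this last step: pointwise weak convergence of random functions does not by itself control argmins, so convexity must be used essentially, both for uniqueness and finiteness of $\argmin V$ (from positive-definiteness of $\bC$) and for the uniform-on-compacts tightness needed to exclude minimizing mass escaping to the boundary. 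The CLT step and the elementary limit of the $\ell_1$-increments are routine given Assumptions~\ref{predictor} and \ref{tuning}.
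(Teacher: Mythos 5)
Your proposal is correct, and it is essentially the proof of \cite{fu2000asymptotics}: this theorem is stated in the paper only as a citation (no proof is given), and your reconstruction — the recentered convex criterion $Z_n$, the Lindeberg--Feller CLT for $n^{-1/2}\bX^{\mathrm{T}}\boldeps$, the pointwise limit of the scaled $\ell_1$-increments, and the convexity/argmin lemma — is exactly the standard Knight--Fu argument. It also parallels the decomposition and Argmax-theorem device the paper itself uses in its proof of Theorem~\ref{limit distribution}, so nothing is missing.
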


It also follows that under the stated conditions, the variance estimator $\hat{\sigma}^2=n^{-1} \|\bY-\bX \hat\btheta\|^2$ is $\sqrt{n}$-consistent for $\sigma^2$, where $\hat\btheta$ is the least square estimator or a ridge-regression estimator $\hat\btheta^{\mathrm{R}}$; see Remark~\ref{root-n-consistency}. Assumption~\ref{predictor} holds with probability tending to one for random predictors if the values are sampled independently from a fixed nonsingular distribution. The second part of the assumption clearly holds if the predictors are uniformly bounded, a restriction often imposed by scaling.

Under the same condition on the tuning parameter ${\lambda_n}{\sqrt{n}} \to \lambda_0$ for some constant  $\lambda_0 \geq 0$, and assuming that $a_n/\sqrt{n}\to 0$, we obtain below a joint weak limit of the posterior distribution of $\sqrt{n}(\btheta^*-\btheta^0)$ given the data in the space $\mathfrak{M}(\mathbb{R}^p)$, equipped with the topology of weak convergence, and the distribution of the normalized LASSO estimator in $\mathbb{R}^p$. 

\begin{theorem}
\label{limit distribution}
Let Conditions \ref{predictor} and \ref{tuning} hold, and  $a_n/\sqrt{n}\to 0$. Then 
\begin{align}
 \label{joint distr}
    \big(\Pi(\sqrt{n}(\btheta^*-\btheta^0)\in \cdot|\bY), \sqrt{n}(\hat{\btheta}^{\mathrm{L}}-\btheta^0)\big) \rightsquigarrow \big(\mathcal{L}(\bm{T}^* \in \cdot |\bm{\Delta}),\bm\xi\big) 
\end{align} 
on the product space $\mathfrak{M}(\R^p) \times \R^p$, 
where 
\begin{align}
\label{projection_lim_rv_small_p}
    \bm{T}^* = \argmin_{\bm{t}\in \R^p} \{ \bm{t}^{\mathrm{T}} \bC \bm{t} - 2 \bm{t}^{\mathrm{T}} \bm{C} \bm{W}^* + \lambda_0 \big[ \sum_{j=1}^{s_0} t_{j}\textnormal{sign}(\theta^0_{j}) + \sum_{j=s_0+1}^{p} |t_j| \big] \},
\end{align} 
for $\bm{W}^*|\bm{\Delta}\sim \normal(\sigma_0 \bm{C}^{-1/2}\bm{\Delta}, \sigma_0^2 \bm{C}^{-1})$ and $\bm\xi$ is defined in \eqref{lasso_lim_dist}. 
\end{theorem}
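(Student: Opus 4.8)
The plan is to convert the claim into a deterministic $\argmin$-continuity fact combined with a classical central limit theorem, and then to lift the resulting convergence from $\R^p$ to the measure space $\mathfrak{M}(\R^p)$. Substituting $\btheta = \btheta^0 + \bm w/\sqrt n$ and $\bm u = \btheta^0 + \bm v/\sqrt n$ in \eqref{sparse_proj} and scaling the objective by $n$, one obtains $\sqrt n(\btheta^* - \btheta^0) = g_n\big(\sqrt n(\btheta - \btheta^0)\big)$ for the \emph{deterministic} map $g_n(\bm w) = \argmin_{\bm v}\{(\bm w - \bm v)^{\mathrm T}\bC_n(\bm w - \bm v) + \sqrt n\lambda_n\sum_{j=1}^p(|\sqrt n\,\theta^0_j + v_j| - \sqrt n|\theta^0_j|)\}$ (the subtracted constant $n\lambda_n\|\btheta^0\|_1$ does not change the minimizer), and, by \eqref{lasso}, also $\sqrt n(\hat\btheta^{\mathrm L} - \btheta^0) = g_n\big(\sqrt n(\hat\btheta - \btheta^0)\big)$. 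Under Assumptions~\ref{predictor} and \ref{tuning}, $\bC_n\to\bC\succ 0$ and $\sqrt n\lambda_n\to\lambda_0$, while $|\sqrt n\,\theta^0_j + v_j| - \sqrt n|\theta^0_j|\to v_j\,\mathrm{sign}(\theta^0_j)$ for $j\le s_0$ and equals $|v_j|$ for $j > s_0$; hence $g_n\to g$ uniformly on compacta, where $g(\bm w) = \argmin_{\bm v}\{\bm v^{\mathrm T}\bC\bm v - 2\bm v^{\mathrm T}\bC\bm w + \lambda_0[\sum_{j\le s_0}v_j\,\mathrm{sign}(\theta^0_j) + \sum_{j>s_0}|v_j|]\}$. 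Since this limiting objective is strictly convex and coercive ($\bC$ being positive definite), $g$ is single-valued and continuous, and an $\argmin$-continuity lemma of Geyer/Knight--Fu type (after checking that the $g_n$ are uniformly coercive for $n$ large) gives $g_n(\bm w_n)\to g(\bm w)$ whenever $\bm w_n\to\bm w$. Comparing definitions, $g(\bm W^*) = \bm T^*$ of \eqref{projection_lim_rv_small_p} and $g(\sigma_0\bC^{-1/2}\bm\Delta) = \bm\xi$ of \eqref{lasso_lim_dist}.

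Second, I would identify the weak limits of the random vectors fed into $g_n$. By the Lindeberg central limit theorem, using $n^{-1}\max_i\|\bX_i\|^2\to 0$ and $\mathbb{E}_0\varepsilon_i^2 = \sigma_0^2$, $n^{-1/2}\bX^{\mathrm T}\boldeps\rightsquigarrow\normal_p(\bm 0,\sigma_0^2\bC)$, which I represent as $\sigma_0\bC^{1/2}\bm\Delta$ with $\bm\Delta\sim\normal_p(\bm 0,\bm I_p)$. The identity $\hat\btheta^{\mathrm R} - \hat\btheta = -a_n(\bX^{\mathrm T}\bX + a_n\bm I_p)^{-1}\hat\btheta$ gives $\sqrt n(\hat\btheta^{\mathrm R} - \hat\btheta) = -(a_n/\sqrt n)(\bC_n + n^{-1}a_n\bm I_p)^{-1}\hat\btheta = o_{\mathbb{P}_0}(1)$ because $a_n/\sqrt n\to 0$, so $\sqrt n(\hat\btheta^{\mathrm R} - \btheta^0) = \bC_n^{-1}(n^{-1/2}\bX^{\mathrm T}\boldeps) + o_{\mathbb{P}_0}(1)$ while $\sqrt n(\hat\btheta - \btheta^0) = \bC_n^{-1}(n^{-1/2}\bX^{\mathrm T}\boldeps)$; thus the pair $\big(\sqrt n(\hat\btheta^{\mathrm R}-\btheta^0),\sqrt n(\hat\btheta-\btheta^0)\big)$ converges jointly to $(\sigma_0\bC^{-1/2}\bm\Delta,\sigma_0\bC^{-1/2}\bm\Delta)$. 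From \eqref{pos dist}, the conditional posterior law of $\sqrt n(\btheta - \btheta^0)$ given $(\bY,\sigma)$ is $\normal_p\big(\sqrt n(\hat\btheta^{\mathrm R} - \btheta^0),\ \sigma^2(\bC_n + n^{-1}a_n\bm I_p)^{-1}\big)$, and by \eqref{posterior sigma} the posterior of $\sigma^2$ concentrates at $\sigma_0^2$ (shape $n/2 + O(1)$, rate $n\sigma_0^2/2 + O_{\mathbb{P}_0}(\sqrt n)$; cf.\ Remark~\ref{root-n-consistency}); hence the marginal posterior law $\nu_n$ of $\sqrt n(\btheta - \btheta^0)$ equals, up to an $o_{\mathbb{P}_0}(1)$ error in a metric for weak convergence on $\mathfrak{M}(\R^p)$, the measure $\normal_p\big(\sqrt n(\hat\btheta^{\mathrm R} - \btheta^0),\ \sigma_0^2\bC^{-1}\big)$.

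Third, I would combine these pieces. The sparse-projection posterior is the pushforward $\mu_n = (g_n)_\ast\nu_n$, so, up to $o_{\mathbb{P}_0}(1)$ in the same metric, $\mu_n = \Gamma_n\big(\sqrt n(\hat\btheta^{\mathrm R}-\btheta^0)\big)$ for the deterministic map $\Gamma_n:\R^p\to\mathfrak{M}(\R^p)$, $\Gamma_n(\bm m) = (g_n)_\ast\normal_p(\bm m,\sigma_0^2\bC^{-1})$. Because $g_n\to g$ uniformly on compacta with $g$ continuous and $\normal_p(\bm m_n,\sigma_0^2\bC^{-1})\rightsquigarrow\normal_p(\bm m,\sigma_0^2\bC^{-1})$ when $\bm m_n\to\bm m$, one gets $\Gamma_n(\bm m_n)\to\Gamma(\bm m)$ in $\mathfrak{M}(\R^p)$ for such sequences, where $\Gamma(\bm m) = (g)_\ast\normal_p(\bm m,\sigma_0^2\bC^{-1})$; moreover $\Gamma(\sigma_0\bC^{-1/2}\bm\Delta) = \mathcal{L}(\bm T^*\mid\bm\Delta)$ since $\bm W^* = \sigma_0\bC^{-1/2}\bm\Delta + \bm Z$ for $\bm Z\sim\normal_p(\bm 0,\sigma_0^2\bC^{-1})$ independent of $\bm\Delta$. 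Applying the extended continuous mapping theorem to the pair of maps $(\Gamma_n, g_n)$ at the jointly convergent arguments from the previous step yields $\big(\Gamma_n(\sqrt n(\hat\btheta^{\mathrm R}-\btheta^0)),\ g_n(\sqrt n(\hat\btheta-\btheta^0))\big)\rightsquigarrow\big(\mathcal{L}(\bm T^*\mid\bm\Delta),\ \bm\xi\big)$, and, after the $o_{\mathbb{P}_0}(1)$ replacement and since $\sqrt n(\hat\btheta^{\mathrm L}-\btheta^0) = g_n(\sqrt n(\hat\btheta-\btheta^0))$, this is exactly \eqref{joint distr} on $\mathfrak{M}(\R^p)\times\R^p$. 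Weak convergence in the first coordinate here is understood through the evaluation functionals $\mu\mapsto\int\phi\,d\mu$ for bounded continuous $\phi$, together with tightness of $\{\mu_n\}$ in $\mathfrak{M}(\mathfrak{M}(\R^p))$ (which follows because $\nu_n$ has tight location and an asymptotically fixed scale and, for $n$ large, $g_n$ sends each ball into a ball of radius independent of $n$).

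The step I expect to be the main obstacle is the passage from $\R^p$ to $\mathfrak{M}(\R^p)$: one must make precise that $\mu_n = \Gamma_n\big(\sqrt n(\hat\btheta^{\mathrm R}-\btheta^0)\big) + o_{\mathbb{P}_0}(1)$ in a weak-convergence metric, which requires handling the scale-mixture structure of $\nu_n$ (integrating out the posterior for $\sigma^2$) and the asymptotic equicontinuity of $\{g_n\}$, and one must invoke the version of the continuous mapping theorem for random probability measures, for which the tightness and joint-measurability hypotheses need to be verified with care; by comparison, the deterministic $\argmin$-continuity $g_n\to g$ and its coercivity bookkeeping, though essential, are fairly standard from LASSO limit theory.
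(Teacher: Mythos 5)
Your proposal is correct, and it rests on the same essential ingredients as the paper's proof: the conjugate normal form of \eqref{pos dist} centered at $\hat\btheta^{\mathrm R}$, posterior contraction of $\sigma$ (the paper's Lemma~\ref{error variance}), negligibility of $\hat\btheta^{\mathrm R}-\hat\btheta$ under $a_n/\sqrt n\to 0$, Knight--Fu-type convex limit theory, and the coupling of the two coordinates through the common random vector $n^{-1/2}\bX^{\mathrm T}\boldeps$. The technical organization differs, though. The paper conditions on $\sigma=\sigma_n$ in a shrinking neighborhood, represents a posterior draw as $\btheta=\hat\btheta^{\mathrm R}-\sigma_n(\bX^{\mathrm T}\bX+a_n\bm I_p)^{-1/2}\bm U_n^*$, shows the conditional law of the recentered objective process converges to the limit process \eqref{process limit}, and invokes the argmax theorem to get convergence of the conditional law of $\bm T_n^*$, finishing the joint statement by observing that both coordinates are continuous functions of $\bX^{\mathrm T}\boldeps$. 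You instead factor both $\btheta^*$ and $\hat\btheta^{\mathrm L}$ through a single \emph{deterministic} sequence of argmin maps $g_n$, prove continuous convergence $g_n\to g$ (argmin continuity for strictly convex coercive limits), and then push the approximately Gaussian posterior forward, concluding with an extended continuous mapping theorem at the level of random probability measures. What your route buys is a more explicit and modular treatment of exactly the step the paper handles tersely --- what weak convergence in $\mathfrak{M}(\R^p)$ of the random measures means and why the joint limit with the LASSO shares the same $\bm\Delta$ --- since everything is reduced to continuity of $\bm m\mapsto (g_n)_\ast\normal_p(\bm m,\sigma_0^2\bC^{-1})$ evaluated at $\sqrt n(\hat\btheta^{\mathrm R}-\btheta^0)$; the price is the extra bookkeeping you flag (equicontinuity of $\{g_n\}$, uniformity of the $o_{\mathbb P_0}(1)$ replacement over the random center, integrating out $\sigma$), which the paper sidesteps by applying the argmax theorem directly to the conditional objective process. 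Both arguments are sound; yours is a legitimate, somewhat more measure-theoretically explicit rewrite of the same proof strategy.
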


As shifting the random distribution by $\hat{\bm\xi}_n=\sqrt{n}(\hat\btheta_n-\btheta^0)$ is a continuous operation in $\mathfrak{M}(\mathbb{R}^p)$, we conclude the following. 

\begin{corollary}
\label{centered posterior}
Under the conditions of \Cref{limit distribution}, 
\begin{align}
    \label{centered joint distr}
    \big(\Pi(\sqrt{n}(\btheta^*-\hat\btheta^\mathrm{L})\in \cdot|\bY), \sqrt{n}(\hat{\btheta}^{\mathrm{L}}-\btheta^0)\big) \rightsquigarrow \big(\mathcal{L}(\bm{T}^* -\bm{\xi}\in \cdot |\bm{\Delta}),\bm\xi\big). 
    \end{align}
\end{corollary}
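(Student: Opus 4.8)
The plan is to deduce \Cref{centered posterior} from \Cref{limit distribution} by applying the continuous mapping theorem to a ``shift'' operation on the product space $\mathfrak{M}(\R^p)\times\R^p$. For $x\in\R^p$ and $\mu\in\mathfrak{M}(\R^p)$, write $\tau_x\mu$ for the pushforward of $\mu$ under the translation $y\mapsto y-x$, and define $\Psi:\mathfrak{M}(\R^p)\times\R^p\to\mathfrak{M}(\R^p)\times\R^p$ by $\Psi(\mu,x)=(\tau_x\mu,\,x)$. Because
\[
\sqrt{n}\big(\btheta^*-\hat\btheta^{\mathrm{L}}\big)=\sqrt{n}\big(\btheta^*-\btheta^0\big)-\sqrt{n}\big(\hat\btheta^{\mathrm{L}}-\btheta^0\big)=\sqrt{n}\big(\btheta^*-\btheta^0\big)-\hat{\bm\xi}_n ,
\]
applying $\Psi$ to the random pair $\big(\Pi(\sqrt{n}(\btheta^*-\btheta^0)\in\cdot\mid\bY),\hat{\bm\xi}_n\big)$ produces exactly $\big(\Pi(\sqrt{n}(\btheta^*-\hat\btheta^{\mathrm{L}})\in\cdot\mid\bY),\hat{\bm\xi}_n\big)$, the left-hand side of \eqref{centered joint distr}.

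The one substantive point is that $\Psi$ is continuous on all of $\mathfrak{M}(\R^p)\times\R^p$, where $\mathfrak{M}(\R^p)$ carries the (Polish) topology of weak convergence, metrized say by the bounded-Lipschitz metric. If $\mu_n\rightsquigarrow\mu$ and $x_n\to x$, then for any bounded Lipschitz $f$ one splits $\int f\,d(\tau_{x_n}\mu_n)-\int f\,d(\tau_x\mu)$ into $\int[f(y-x_n)-f(y-x)]\,\mu_n(dy)$, which is bounded by $\|f\|_{\mathrm{Lip}}\,\|x_n-x\|\to 0$, plus $\int f(y-x)\,\mu_n(dy)-\int f(y-x)\,\mu(dy)$, which vanishes because $y\mapsto f(y-x)$ is bounded and continuous; hence $\tau_{x_n}\mu_n\rightsquigarrow\tau_x\mu$ and $\Psi$ is jointly continuous. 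Since the product space is Polish, the continuous mapping theorem applies to the joint weak convergence \eqref{joint distr} of \Cref{limit distribution}, so $\Psi$ of its left-hand side converges weakly to $\Psi$ of its right-hand side.

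It then remains only to identify $\Psi$ of the limit $\big(\mathcal{L}(\bm{T}^*\in\cdot\mid\bm\Delta),\bm\xi\big)$. The minimizer $\bm\xi$ in \eqref{lasso_lim_dist} is unique, the objective being strictly convex in $\bm v$ because $\bC$ is positive definite, and hence $\bm\xi$ is a measurable function of $\bm\Delta$; conditionally on $\bm\Delta$ it is therefore deterministic, so $\tau_{\bm\xi}\mathcal{L}(\bm{T}^*\mid\bm\Delta)=\mathcal{L}(\bm{T}^*-\bm\xi\mid\bm\Delta)$ is well defined and $\Psi\big(\mathcal{L}(\bm{T}^*\in\cdot\mid\bm\Delta),\bm\xi\big)=\big(\mathcal{L}(\bm{T}^*-\bm\xi\in\cdot\mid\bm\Delta),\bm\xi\big)$, which is the right-hand side of \eqref{centered joint distr}. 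I expect the continuity of $\Psi$, together with the bookkeeping that shifting the conditional law by $\bm\xi$ is legitimate because $\bm\xi$ is $\sigma(\bm\Delta)$-measurable, to be the only steps requiring care; everything else is a direct invocation of \Cref{limit distribution} and the continuous mapping theorem.
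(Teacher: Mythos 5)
Your proposal is correct and follows essentially the same route as the paper: the paper's proof is a one-line appeal to the fact that shifting the random measure by $\hat{\bm\xi}_n$ is a continuous operation on $\mathfrak{M}(\R^p)\times\R^p$, combined with the joint weak convergence of \Cref{limit distribution}. You have simply spelled out the details the paper leaves implicit (the bounded-Lipschitz proof that $(\mu,x)\mapsto(\tau_x\mu,x)$ is jointly continuous, and the observation that $\bm\xi$ is a measurable function of $\bm\Delta$ so that the shifted limit is $\mathcal{L}(\bm{T}^*-\bm\xi\mid\bm\Delta)$), which is a faithful elaboration rather than a different argument.
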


We show in \Cref{pos_prob_at_0} that when $\theta^0_{j} = 0$ for all $j = s_0+1,\dots,p$, the minimizer $\bm{T}^*$ assumes the value $0$ for its last $p-s_0$ components with positive probability. The distribution has no closed-form expression except for some special cases, and the statement is only marginally useful. However, we can compute the limiting frequentist coverage of a Bayesian credible region defined by the quantiles of the posterior distribution of $\sqrt{n}(\btheta^*-\btheta^0)$, which can be obtained from \Cref{limit distribution}. 

For $\bar\btheta\in \mathbb{R}^p$ and $r>0$, let  
\begin{align}
\label{credible set}
    B_\mathcal{K}(\bar{\btheta}, r) = \{\btheta: \|\sqrt{n}(\btheta - \bar{\btheta}) \|_{\mathcal{K}} \le r\},  
\end{align} 
where the seminorm $\|\cdot\|_{\mathcal{K}} $ is defined by 
\begin{align} 
\label{Minkowski}
\|\bm{x}\|_{\mathcal{K}}=\inf\{\lambda > 0 : \bm{x}/{\lambda} \in \mathcal{K}\},  
\end{align}
the Minkowski functional of a convex set $\mathcal{K}$ containing $0$ in its interior. 
Define a $(1-\alpha)$-credible set to be $B_\mathcal{K}(\hat{\btheta}^{\mathrm{L}}, r_{1-\alpha})$, where $r_{1-\alpha}$ is the $(1-\alpha)$-quantile of the posterior distribution of $\|\sqrt{n}(\btheta^* - \hat{\btheta}^{\mathrm{L}}) \|_{\mathcal{K}}$, that is 
\begin{align} 
\label{quantile of radius}
\Pi^*(B_\mathcal{K}(\hat{\btheta}^{\mathrm{L}}, r_{1-\alpha})|\bY)=
\Pi(\btheta^*: \|\sqrt{n}(\btheta^* - \hat{\btheta}^\mathrm{L}) \|_{\mathcal{K}} \le  r_{1-\alpha}|\bY) = 1 - \alpha. 
\end{align}
If the equality cannot be obtained, we choose $r_{1-\alpha}$ the infimum of all $r>0$ such that $\Pi^*(B_\mathcal{K}(\hat{\btheta}^{\mathrm{L}}, r)|\bY)>1-\alpha$. 
For different choices of $\mathcal{K}$, we can have coverage results for different norms or pseudo-norms. For example, choosing $\mathcal{K} = [-1,1]^p$, we get the max norm $\|\cdot\|_{\infty}$, choosing $\mathcal{K}$ as a unit sphere, we get the unit norm $\|\cdot\|_2$, choosing $\mathcal{K}$ as a unit diamond, we get the $\ell_1$-norm $\|\cdot\|_1$, and choosing $\mathcal{K} = \R\times \cdots\times \R \times [-1,1] \times \R \times \cdots\times \R$, we get credible intervals for individual components. 

\begin{theorem}
\label{coverage}
    For a given level of confidence $(1-\alpha)$ and $\bm\Delta \sim \normal_p(\bm{0},\boldsymbol{I}_p)$, we have under Assumptions \ref{predictor} and \ref{tuning} and $a_n/\sqrt{n}\to 0$,
    \begin{eqnarray}
    \lefteqn{\mathrm{P}\left(\mathrm{P}(\|\bm{T}^* - \bm\xi\|_\mathcal{K} \leq \|\bm\xi\|_\mathcal{K} \big| \bm\Delta) < 1 - \alpha\right)}\nonumber \\
    && \le \liminf_{n\to\infty} \mathbb{P}_{0}\big(\btheta^0 \in B_\mathcal{K}(\hat{\btheta}^\mathrm{L}, r_{1-\alpha})\big) 
     \nonumber \\
    && \le \limsup_{n\to\infty} \mathbb{P}_{0}\big(\btheta^0 \in B_\mathcal{K}(\hat{\btheta}^\mathrm{L}, r_{1-\alpha})\big) \nonumber \\
    && \le \mathrm{P}\left(\mathrm{P}(\|\bm{T}^* - \bm\xi\|_\mathcal{K} \leq \|\bm\xi\|_\mathcal{K} \big| \bm\Delta) \le  1 - \alpha\right),
     \label{eq:coverage} 
    \end{eqnarray}
    where $\bm{T}^*$ and $\bm\xi$ are as in Theorem~\ref{limit distribution}. 
\end{theorem}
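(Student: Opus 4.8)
The plan is to deduce the coverage bounds from the joint weak convergence in Theorem~\ref{limit distribution} together with a portmanteau-type argument applied to the quantile functional. Write the coverage event as $\btheta^0\in B_\mathcal{K}(\hat\btheta^\mathrm{L},r_{1-\alpha})$, which by definition \eqref{credible set} is $\{\|\sqrt n(\hat\btheta^\mathrm{L}-\btheta^0)\|_\mathcal{K}\le r_{1-\alpha}\}=\{\|\hat{\bm\xi}_n\|_\mathcal{K}\le r_{1-\alpha}\}$. Here $r_{1-\alpha}=r_{1-\alpha}(\bY)$ is the $(1-\alpha)$-quantile of the law of $\|\sqrt n(\btheta^*-\hat\btheta^\mathrm{L})\|_\mathcal{K}$ under the immersion posterior, i.e.\ the quantile of the random distribution $\Pi(\sqrt n(\btheta^*-\hat\btheta^\mathrm{L})\in\cdot\,|\bY)$ pushed forward by the map $\bm{x}\mapsto\|\bm{x}\|_\mathcal{K}$. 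So the whole quantity of interest — the pair $\big(\|\hat{\bm\xi}_n\|_\mathcal{K},\ r_{1-\alpha}(\bY)\big)$ — is a functional of the pair appearing on the left of \eqref{centered joint distr} in Corollary~\ref{centered posterior}, namely $\big(\Pi(\sqrt n(\btheta^*-\hat\btheta^\mathrm{L})\in\cdot\,|\bY),\ \hat{\bm\xi}_n\big)$, which converges weakly to $\big(\mathcal{L}(\bm{T}^*-\bm\xi\,|\,\bm\Delta),\ \bm\xi\big)$.

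The argument then proceeds in three steps. First, I would record the continuity of the auxiliary maps: pushforward by the fixed continuous map $\bm{x}\mapsto\|\bm{x}\|_\mathcal{K}$ is continuous on $\mathfrak M(\mathbb R^p)$ (the Minkowski functional is continuous because $\mathcal{K}$ has $0$ in its interior and is convex), and the Euclidean-to-seminorm map $\hat{\bm\xi}_n\mapsto\|\hat{\bm\xi}_n\|_\mathcal{K}$ is continuous on $\mathbb R^p$; applying the continuous mapping theorem to \eqref{centered joint distr} gives joint weak convergence of $\big(\mu_n,\ \|\hat{\bm\xi}_n\|_\mathcal{K}\big)$ to $\big(\nu,\ \|\bm\xi\|_\mathcal{K}\big)$, where $\mu_n$ is the law of $\|\sqrt n(\btheta^*-\hat\btheta^\mathrm{L})\|_\mathcal{K}$ under the posterior and $\nu=\mathcal{L}(\|\bm{T}^*-\bm\xi\|_\mathcal{K}\,|\,\bm\Delta)$. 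Second, I would pass from convergence of the distribution-valued object $\mu_n$ to its quantile: the $(1-\alpha)$-quantile map $\mu\mapsto r_{1-\alpha}(\mu)$ is lower semicontinuous in general and is continuous at every $\mu$ whose CDF is strictly increasing and continuous at its own $(1-\alpha)$-quantile. Since here we only need the two one-sided inequalities, I do \emph{not} need full continuity: I can instead argue directly that $\{\|\hat{\bm\xi}_n\|_\mathcal{K}\le r_{1-\alpha}(\mu_n)\}$ is, up to vanishing probability, sandwiched between the events $\{\mathrm{CDF}_{\mu_n}(\|\hat{\bm\xi}_n\|_\mathcal{K})<1-\alpha\}$ and $\{\mathrm{CDF}_{\mu_n}(\|\hat{\bm\xi}_n\|_\mathcal{K})\le 1-\alpha\}$ — this is exactly the generalized-inverse relation $\{r_{1-\alpha}(\mu)\ge x\}\supseteq\{\mathrm{CDF}_\mu(x^-)<1-\alpha\}$ and $\{r_{1-\alpha}(\mu)\ge x\}\subseteq\{\mathrm{CDF}_\mu(x)\le 1-\alpha\}$ read in the right direction. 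Third, I would apply the portmanteau theorem to the functional $(\mu,x)\mapsto \mathbf 1\{\mathrm{CDF}_\mu(x)\,\lessgtr\,1-\alpha\}$: the map $(\mu,x)\mapsto\mathrm{CDF}_\mu(x)=\mu((-\infty,x])$ is upper semicontinuous jointly, so $\{(\mu,x):\mathrm{CDF}_\mu(x)<1-\alpha\}$ is open and $\{(\mu,x):\mathrm{CDF}_\mu(x)\le 1-\alpha\}$ is closed; portmanteau then yields
\begin{align*}
\liminf_{n} \mathbb P_0\big(\mathrm{CDF}_{\mu_n}(\|\hat{\bm\xi}_n\|_\mathcal{K})<1-\alpha\big)
&\ge \mathrm P\big(\mathrm{CDF}_{\nu}(\|\bm\xi\|_\mathcal{K})<1-\alpha\big),\\
\limsup_{n} \mathbb P_0\big(\mathrm{CDF}_{\mu_n}(\|\hat{\bm\xi}_n\|_\mathcal{K})\le 1-\alpha\big)
&\le \mathrm P\big(\mathrm{CDF}_{\nu}(\|\bm\xi\|_\mathcal{K})\le 1-\alpha\big),
\end{align*}
and unwinding $\mathrm{CDF}_\nu(\|\bm\xi\|_\mathcal{K})=\mathrm P(\|\bm{T}^*-\bm\xi\|_\mathcal{K}\le\|\bm\xi\|_\mathcal{K}\,|\,\bm\Delta)$ (noting that $\bm\xi$ is $\bm\Delta$-measurable, so conditioning on $\bm\Delta$ fixes $\|\bm\xi\|_\mathcal{K}$) gives precisely the outer bounds in \eqref{eq:coverage}.

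The main obstacle is the handoff at the boundary, Step~two: the definition of $r_{1-\alpha}$ in \eqref{quantile of radius} already anticipates the possibility that the posterior CDF jumps across the level $1-\alpha$ (this genuinely happens, since by \Cref{pos_prob_at_0} the limiting law $\nu$ can place an atom when a null coordinate is involved), so I cannot simply invoke continuity of the quantile map and must be careful that the two chosen closed/open functionals really do bracket the coverage event for every realization of $\mu_n$, not just $\mathbb P_0$-almost surely. Concretely I need the deterministic inclusions $\{\mathrm{CDF}_\mu(x^-)<1-\alpha\}\subseteq\{x\le r_{1-\alpha}(\mu)\}\subseteq\{\mathrm{CDF}_\mu(x)\le 1-\alpha\}$ valid for all $(\mu,x)$, then replace $\mathrm{CDF}_\mu(x^-)$ by $\mathrm{CDF}_\mu(x)$ at the cost of an event $\{\mathrm{CDF}_\mu$ has a jump at $x\}$ whose limiting probability I must show is negligible or can be absorbed — this is where the two-sided ($\liminf$/$\limsup$) form of the statement, rather than a single limit, earns its keep, because the gap between the two portmanteau bounds is exactly $\mathrm P(\|\bm{T}^*-\bm\xi\|_\mathcal{K}=\|\bm\xi\|_\mathcal{K}\,|\,\bm\Delta)$ integrated out, which need not vanish. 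A secondary technical point is confirming joint (not just separate) upper semicontinuity of $(\mu,x)\mapsto\mu((-\infty,x])$ on $\mathfrak M(\mathbb R)\times\mathbb R$ under weak convergence, which follows from $\mu((-\infty,x])=\inf_{\delta>0}\limsup$-free reasoning via closed sets $(-\infty,x]$ plus $x_n\to x$, and from the fact that the seminorm pushforwards $\mu_n$ live on $[0,\infty)$ so no mass escapes to $-\infty$.
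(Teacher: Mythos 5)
Your proposal is correct and follows essentially the same route as the paper's proof: apply the continuous mapping theorem to Corollary~\ref{centered posterior} to get joint weak convergence of the $\|\cdot\|_\mathcal{K}$-pushforward posterior law and $\|\hat{\bm\xi}_n\|_\mathcal{K}$, sandwich the coverage event between $\{\Pi(\|\sqrt{n}(\btheta^*-\hat\btheta^\mathrm{L})\|_\mathcal{K}\le\|\sqrt{n}(\btheta^0-\hat\btheta^\mathrm{L})\|_\mathcal{K}\,|\,\bY)<1-\alpha\}$ and the corresponding $\le$ event, and conclude by portmanteau. Your explicit verification of the joint upper semicontinuity of $(\mu,x)\mapsto\mu((-\infty,x])$ and your care about jumps of the posterior CDF at the quantile merely spell out the step the paper dismisses as ``immediate,'' so there is no substantive difference.
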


The limiting coverage of a sparse-projection posterior credible region is thus assured to be at least the probability on the right side, but it may not equal the credibility level $1-\alpha$. The limiting coverage depends on $\bC$, $\sigma_0$, $\lambda_0$ and $\alpha$. In the following subsection, we evaluate the limiting coverage explicitly for a credible interval for a component under an asymptotic orthogonality setting. 

\subsection{Special Case: Asymptotically Uncorrelated Predictor}

We now explicitly evaluate the limiting coverage of a credible interval of a single regression coefficient $\theta_j$, assuming that the $j$th predictor is asymptotically uncorrelated with the rest, for some $j=1,\ldots,p$.  

\begin{assumption}[Asymptotically uncorrelated predictor]
\label{asymp_indep}
    The $j$th predictor variable $X_j$ is asymptotically uncorrelated with the remaining predictors $\{X_1,\ldots,X_{j-1},X_{j+1},\ldots,X_p\}$, that is,    $\bC=\lim_{n\to\infty} n^{-1}\bX^\mathrm{T}\bX$ is block-diagonal with blocks $\{j\}$ and $[-j]:=\{1,\ldots,j-1,j+1,\ldots,p\}$.
\end{assumption}

Let $c_j$ stand for the $j$th diagonal entry of $\bC$. Note that, under the above assumption, all other entries of the $j$th row and $j$th column of $\bC$ are 0.

Although the assumption of asymptotic uncorrelatedness is very special, it may naturally hold for predictors not related to other predictors. Moreover, the condition can be enforced by an orthogonalization technique, but it then changes the meaning of the coefficient.  

Consider the credible interval $ [\hat\theta^{\mathrm{L}}_j-r_{1-\alpha,j},\hat\theta^{\mathrm{L}}_j+r_{1-\alpha,j}]$, where $\hat\theta^{\mathrm{L}}_j$ is the $j$th coordinate of the LASSO estimator $\hat\btheta^{\mathrm{L}}$ and $r_{1-\alpha,j}$ is the $(1-\alpha)$-quantile of the induced posterior distribution of $|\theta^*_j-\hat\theta_j^{\mathrm{L}}|$. This credible interval corresponds to the credible region in \eqref{credible set} with $\mathcal{K}=\R\times \cdots\times \R \times [-1,1] \times \R \times \cdots\times \R$, where $[-1,1]$ appears at the $j$th coordinate. Under Assumption~\ref{asymp_indep}, it follows from 
\eqref{lasso_lim_dist} that 
the $j$th component $\xi_j$ of $\bm\xi$ is 
given by 
\begin{align}
  \xi_j & =\mathrm{argmin} \{ c_j v_j^2 -2\sigma_0 \sqrt{c_j} \Delta_j v_j+\lambda_0 [v_j \mathrm{sign}(\theta_j^0)+|v_j|\mathbbm{1}(\theta_j=0)]: v_j\in \R\}  \nonumber\\
  &=\begin{cases} 
  \frac{\sigma_0}{\sqrt{c_j}}\Delta_j-\frac{\lambda_0}{2 c_j}, & \mbox{ if }\theta_j^0>0,\\
  \frac{\sigma_0}{\sqrt{c_j}}\Delta_j+\frac{\lambda_0}{2 c_j}, & \mbox{ if }\theta_j^0<0,\\
  (\frac{\sigma_0}{\sqrt{c_j}}\Delta_j-\frac{\lambda_0}{2 c_j})\mathbbm{1}\{\Delta_j>\frac{\lambda_0}{2 \sigma_0\sqrt{c_j}}\} \\ 
  + (\frac{\sigma_0}{\sqrt{c_j}}\Delta_j+\frac{\lambda_0}{2 c_j})\mathbbm{1}\{\Delta_j<-\frac{\lambda_0}{2 \sigma_0\sqrt{c_j}}\}, & \mbox{ if }\theta_j^0=0.
  \end{cases}
  \label{jth_lasso_lim_dist}
\end{align}
Thus, $\xi_j$ depends on $\bm{\Delta}$ only through its $j$ component $\Delta_j$ and is independent of the remaining components. 

Similarly, from \eqref{projection_lim_rv_small_p}, it follows that 
the $j$th coordinate $T_j^*$ of $\bm{T}^*$ is given by 
\begin{align}
  T_j^* & =\mathrm{argmin} \{ c_j t_j^2 -2 c_j W_j^* t_j+\lambda_0 [t_j \mathrm{sign}(\theta_j^0)+|t_j|\mathbbm{1}(\theta_j=0)]: t_j\in \R\}  \nonumber\\
  &=\begin{cases} 
  W_j^*-\frac{\lambda_0}{2 c_j}, & \mbox{ if }\theta_j^0>0,\\
  W_j^*+\frac{\lambda_0}{2 c_j}, & \mbox{ if }\theta_j^0<0,\\
  (W_j^*-\frac{\lambda_0}{2 c_j})\mathbbm{1}\{W_j^*>\frac{\lambda_0}{2 c_j}\} \\ 
  + (W_j^*+\frac{\lambda_0}{2 c_j})\mathbbm{1}\{W_j^*<-\frac{\lambda_0}{2 c_j}\}, & \mbox{ if }\theta_j^0=0,
  \end{cases}
  \label{jth_bayes_lim_dist}
\end{align}
where $W_j^*$ is the $j$th component of $\bm{W}^*$, and that $W_j^*|\Delta_j \sim \normal (\Delta_j,1)$. 

We then introduce the notations 
\begin{align} 
h_+(\lambda_0,\zeta) &=2\Phi(|\zeta-\lambda_0/2|)-1, \nonumber \\ 
h_-(\lambda_0,\zeta) &=2\Phi(|\zeta+\lambda_0/2|)-1=h_+(\lambda_0,-\zeta),\nonumber\\
\label{h0}
h_0(\lambda_0,\zeta)&=\begin{cases} 
  \Phi(\zeta-\lambda_0/2)-\Phi(-\zeta-\lambda_0/2), & \mbox{ if }\zeta>\lambda_0/2,\\
 \Phi(-\zeta+\lambda_0/2)-\Phi(\zeta+\lambda_0/2), & \mbox{ if }\zeta<\lambda_0/2,\\
 \Phi(\zeta+\lambda_0/2)-\Phi(\zeta-\lambda_0/2), & \mbox{ if }|\zeta|\le \lambda_0/2.\\
  \end{cases}
\end{align}
and 
\begin{align} 
    \psi(\alpha,\lambda_0)& =\Phi(\lambda_0/2 + z_{\alpha/2})-\Phi(\lambda_0/2 - z_{\alpha/2}), 
    \label{nonzero limit}\\
    \psi_0(\alpha,\lambda_0)& = \int_{-\infty}^{\infty} \mathbbm{1}\big(h_0(\lambda_0,\zeta) \leq 1 - \alpha\big) \phi(\zeta) d\zeta.
    \label{zero limit}
    \end{align}

The asymptotic coverage of the credible interval $ [\hat\theta^{\mathrm{L}}_j-r_{1-\alpha,j},\hat\theta^{\mathrm{L}}_j+r_{1-\alpha,j}]$ for $\theta_j$ is characterized in the following theorem.

\begin{theorem}
\label{coverage of credible interval} 
If Assumptions~\ref{predictor}, \ref{tuning} and \ref{asymp_indep} hold and $a_n/\sqrt{n}\to 0$, then 
for a given level of confidence $0<1-\alpha<1$,  
    \begin{align}
    \mathbb{P}_{0}\big( \theta_{0j}\in  [\hat\theta_j^{\mathrm{L}}-r_{1-\alpha,j},\hat\theta_j^{\mathrm{L}}+r_{1-\alpha,j}]  \big) \to 
    \begin{cases}  \psi(\alpha, \lambda_0\sqrt{c_j}/\sigma_0), & \mbox{ if }\theta_j\ne 0,\\
      \psi_0(\alpha, \lambda_0\sqrt{c_j}/\sigma_0), & \mbox{ if }\theta_j= 0. 
    \label{explicit coverage}
    \end{cases} 
    \end{align}
\end{theorem}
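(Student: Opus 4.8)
The plan is to obtain the result from Theorem~\ref{coverage} by specializing the convex body there to the $j$th coordinate slab and then evaluating, in closed form, the two probabilities that bracket the limiting coverage in \eqref{eq:coverage}. Take $\mathcal K=\R\times\cdots\times[-1,1]\times\cdots\times\R$ with the interval $[-1,1]$ in the $j$th slot, so that $\|\bm x\|_{\mathcal K}=|x_j|$ and $B_{\mathcal K}(\hat\btheta^{\mathrm{L}},r)=\{\btheta:\ |\sqrt n(\theta_j-\hat\theta^{\mathrm{L}}_j)|\le r\}$; with $r=\sqrt n\,r_{1-\alpha,j}$ this is precisely the interval in the statement, and the radius $r_{1-\alpha}$ of Theorem~\ref{coverage} is $\sqrt n\,r_{1-\alpha,j}$. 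Then \eqref{eq:coverage} squeezes the limiting coverage between $\mathrm P(g(\Delta_j)<1-\alpha)$ and $\mathrm P(g(\Delta_j)\le1-\alpha)$, where $g(\zeta):=\mathrm P(|T_j^*-\xi_j|\le|\xi_j|\mid\Delta_j=\zeta)$; Assumption~\ref{asymp_indep} is what lets us condition on $\Delta_j$ alone rather than on all of $\bm\Delta$, since under it $(\xi_j,T_j^*)$ is a function of $\bm\Delta$ only through $\Delta_j$ with $W_j^*\mid\Delta_j$ univariate normal. So the statement reduces to (i) computing $g$, (ii) checking $\mathrm P(g(\Delta_j)=1-\alpha)=0$ so that the two brackets agree and the limit exists, and (iii) evaluating $\mathrm P(g(\Delta_j)\le1-\alpha)$.

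For (i) I use the explicit minimizers \eqref{jth_lasso_lim_dist}--\eqref{jth_bayes_lim_dist}; after rescaling $\Delta_j,W_j^*,\xi_j,T_j^*$ so that $W_j^*\mid\Delta_j\sim\normal(\Delta_j,1)$ — which leaves the scale-invariant event defining $g$ unchanged — the effective penalty is $\lambda_0\sqrt{c_j}/\sigma_0$, which I keep calling $\lambda_0$ below. If $\theta^0_j\ne0$, say $\theta^0_j>0$, then $\xi_j=\Delta_j-\lambda_0/2$ is a deterministic function of $\Delta_j$ while $T_j^*-\xi_j=W_j^*-\Delta_j\sim\normal(0,1)$ is independent of $\Delta_j$, whence $g(\zeta)=2\Phi(|\zeta-\lambda_0/2|)-1=h_+(\lambda_0,\zeta)$ (and $h_-(\lambda_0,\zeta)$ when $\theta^0_j<0$). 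If $\theta^0_j=0$, both $\xi_j$ and $T_j^*$ are soft-thresholdings at level $\lambda_0/2$, of $\Delta_j$ and of $W_j^*\mid\Delta_j\sim\normal(\Delta_j,1)$ respectively; splitting $\zeta$ into the regimes $\zeta>\lambda_0/2$, $\zeta<-\lambda_0/2$ and $|\zeta|\le\lambda_0/2$ and rewriting $\{|T_j^*-\xi_j|\le|\xi_j|\}$ as an event on $W_j^*$ in each — in the first regime it is $\{-\lambda_0/2\le W_j^*\le2\zeta-\lambda_0/2\}$, and in the middle regime, where $\xi_j=0$, it is $\{|W_j^*|\le\lambda_0/2\}$ — produces exactly $g(\zeta)=h_0(\lambda_0,\zeta)$ of \eqref{h0}.

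For (ii), in every case $g(\cdot)$ is continuous and piecewise strictly monotone in $\zeta$ (for example, on $|\zeta|\le\lambda_0/2$ the derivative of $h_0$ is $\phi(\zeta+\lambda_0/2)-\phi(\zeta-\lambda_0/2)$, which vanishes only at $\zeta=0$), so the level set $\{g=1-\alpha\}$ is finite and hence $\Delta_j$-null; the two brackets in \eqref{eq:coverage} then coincide and the limit in \eqref{explicit coverage} exists. For (iii), when $\theta^0_j\ne0$ one has $h_\pm(\lambda_0,\zeta)\le1-\alpha\iff|\zeta\mp\lambda_0/2|\le z_{\alpha/2}$, so $\mathrm P(g(\Delta_j)\le1-\alpha)=\Phi(\lambda_0/2+z_{\alpha/2})-\Phi(\lambda_0/2-z_{\alpha/2})=\psi(\alpha,\lambda_0)$; when $\theta^0_j=0$, $\mathrm P(g(\Delta_j)\le1-\alpha)=\int_{-\infty}^{\infty}\mathbbm{1}(h_0(\lambda_0,\zeta)\le1-\alpha)\,\phi(\zeta)\,d\zeta=\psi_0(\alpha,\lambda_0)$. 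Reinstating $\lambda_0\sqrt{c_j}/\sigma_0$ for $\lambda_0$ yields \eqref{explicit coverage}.

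The delicate point is step (i) when $\theta^0_j=0$, together with the boundary bookkeeping it carries. The conditional law of $|T_j^*-\xi_j|$ given $\Delta_j$ has an atom at the origin — coming from $\{T_j^*=0\}=\{\xi_j=0\}$ on $\{|\Delta_j|\le\lambda_0/2\}$ — so the naive equivalence ``coverage $\Leftrightarrow g(\Delta_j)\le1-\alpha$'' can fail exactly on the event where $|\xi_j|$ equals the conditional $(1-\alpha)$-quantile of $|T_j^*-\xi_j|$; this is precisely why \eqref{eq:coverage} is stated as a $<$/$\le$ sandwich and not an equality, and step (ii) is what forces the sandwich to close. It is also worth recording that, since coverage is automatic on $\{\hat\theta^{\mathrm{L}}_j=0\}$ (the interval then contains $\theta^0_j=0$ regardless of $r_{1-\alpha,j}$), identifying the null limit cleanly with $\psi_0(\alpha,\lambda_0\sqrt{c_j}/\sigma_0)$ uses that the credibility level is high enough for $h_0(\lambda_0\sqrt{c_j}/\sigma_0,\cdot)$ to remain at most $1-\alpha$ on all of $\{|\zeta|\le\lambda_0\sqrt{c_j}/(2\sigma_0)\}$, consistent with the remark in the introduction; otherwise that event requires separate accounting.
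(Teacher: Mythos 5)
Your proposal is correct and follows essentially the same route as the paper's proof: you specialize Theorem~\ref{coverage} to the coordinate slab $\mathcal{K}$, rescale as in Remark~\ref{scaling} so that $c_j=1$, $\sigma_0=1$, compute the conditional probability explicitly from \eqref{jth_lasso_lim_dist}--\eqref{jth_bayes_lim_dist} as $h_+(\lambda_0,\Delta_j)$, $h_-(\lambda_0,\Delta_j)$, $h_0(\lambda_0,\Delta_j)$, close the $<$/$\le$ sandwich by noting these are continuous (atomless) random variables, and evaluate the resulting probabilities as $\psi$ and $\psi_0$, which is exactly the paper's argument. The only deviation is your closing caveat that identifying the null-case limit with $\psi_0$ needs $h_0\le 1-\alpha$ on all of $\{|\zeta|\le\lambda_0\sqrt{c_j}/(2\sigma_0)\}$: the paper imposes no such condition, since once Theorem~\ref{coverage} is taken as given, the atomlessness of $h_0(\lambda_0,\Delta_j)$ alone closes the sandwich for every $0<\alpha<1$.
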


\begin{remark}\rm 
\label{scaling}
    The original data generating process $$Y_i=\sum_{k=1}^p X_{ik}\theta_k+\varepsilon_i=
    X_{ij}\theta_j+\sum_{k=1: k\ne j}^p X_{ik}\theta_k+\varepsilon_i$$ can be rewritten as 
    \begin{align}
        \tilde{Y}_i=\tilde{X}_{ij} \tilde{\theta}_{0j}+\sum_{k=1: k\ne j}^p X_{ik}\frac{\theta_{0k}}{\sigma_0}+\tilde{\varepsilon}_i,
        \label{transformed linear model}
    \end{align}
    where $\tilde{Y}_i=Y_i/\sigma_0$, $\tilde{X}_{ij}=x_{ij}/\sqrt{c_j}$, $\tilde{\varepsilon}_i=\varepsilon_i/\sigma_0$ and $\tilde{\theta}_{j}^0=\theta_{j}^0\sqrt{c_j}/\sigma_0$. For the transformed process, the $j$th predictor $\tilde{X}_j$ still remains asymptotically uncorrelated with the remaining predictors, and the corresponding $\bC$-matrix has the $j$th diagonal entry 1. The true error variance also reduces to 1. Thus, the asymptotic coverage for the credible interval for $\theta_j$ using the penalty parameter $\lambda_0$ coincides with that for $\tilde{\theta}_j=\theta_j\sigma_0/\sqrt{c_j}$ using the penalty parameter $\lambda_0\sqrt{c_j}/\sigma_0$. Hence, to prove Theorem~\ref{coverage of credible interval}, we can assume, without loss of generality, that $c_j=1$ and $\sigma_0 = 1$. 
\end{remark}  

Below, in \Cref{fig:small_p_coverage}, we provide the plots of $h_+(\lambda_0,\zeta)$ and $h_0(\lambda_0,\zeta)$ against $\zeta$ for different choices of the limiting penalty $\lambda_0$ as well as the limiting coverage probabilities against the confidence level for the signal and noise components. Since $\psi(\alpha,\lambda_0)\le 1-\alpha$ with equality holding only for $\lambda_0=0$,  for the signal components, no non-zero value of the penalty parameter $\lambda_0$ can lead to the exact $(1-\alpha)$ asymptotic coverage, or more, of a $(1-\alpha)$ sparse-projection posterior credible interval. However, by fixing a penalty parameter $\lambda_0$, we can ensure that for some $\gamma$ depending on $\lambda_0$ and $\alpha$, a $(1-\gamma)$ credible interval will have exactly $(1-\alpha)$ asymptotic coverage if $\theta^0_j\ne 0$. Moreover, in \Cref{fig:cov_vs_lam_for_diff_alphas}, the function $\psi(\alpha,\lambda_0)$ is shown to be uniformly bounded above by $\psi_0(\alpha,\lambda_0)$ for a given $\alpha$, meaning that the limiting coverage of a noise component will always be higher than that of a signal component. Thus, owing to the fact that  $\psi_0(\alpha,\lambda_0) \ge  \psi(\alpha,\lambda_0)$, as shown by numerical evaluation in \Cref{fig:cov_vs_lam_for_diff_alphas}, it follows that the asymptotic coverage of the corresponding interval is at least $(1-\alpha)$. 

\begin{figure}[htbp]
    \centering
    \includegraphics[width = 0.4\linewidth, height = 4.5cm]{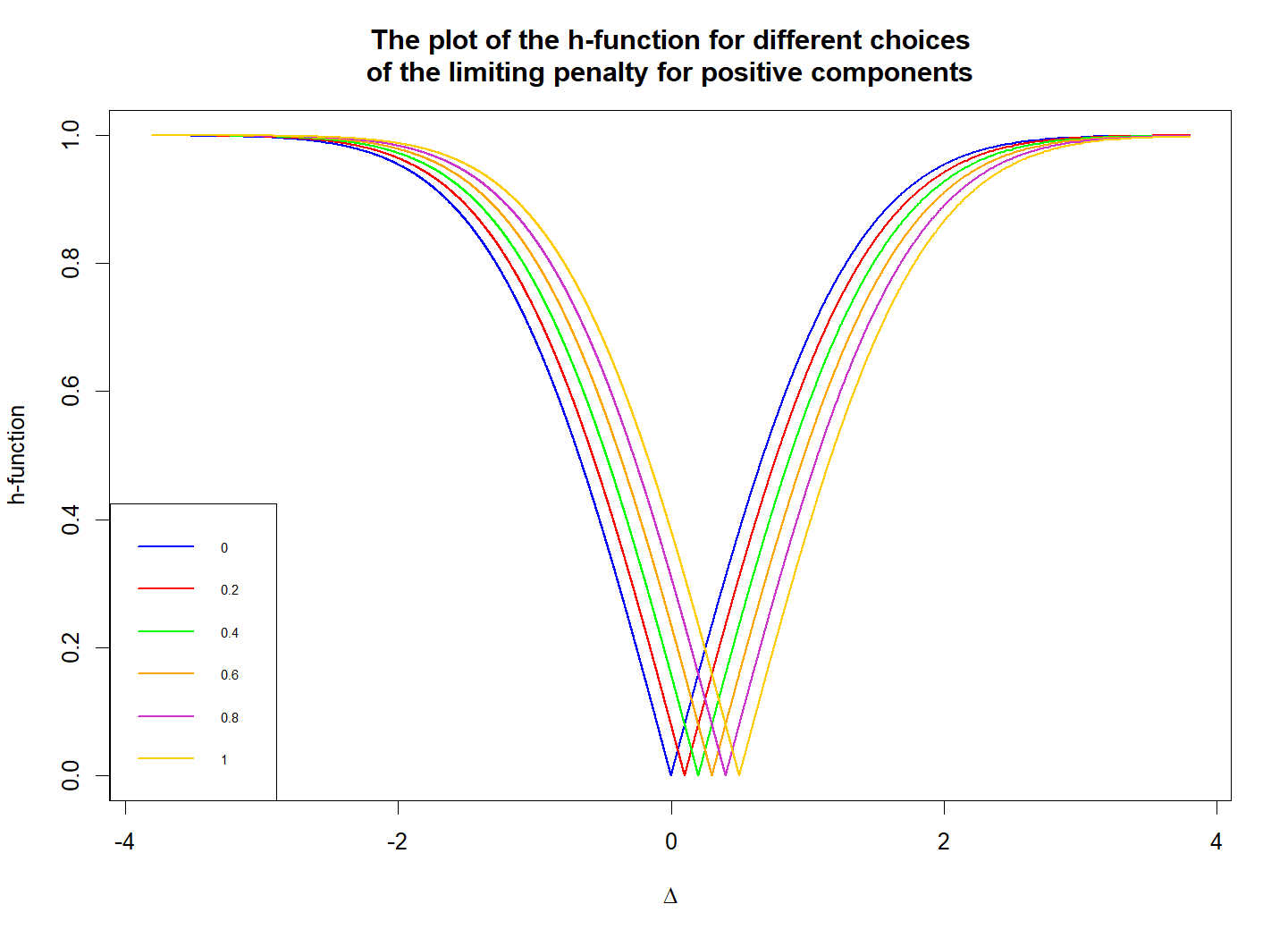}
    \includegraphics[width = 0.4\linewidth, height = 4.5cm]{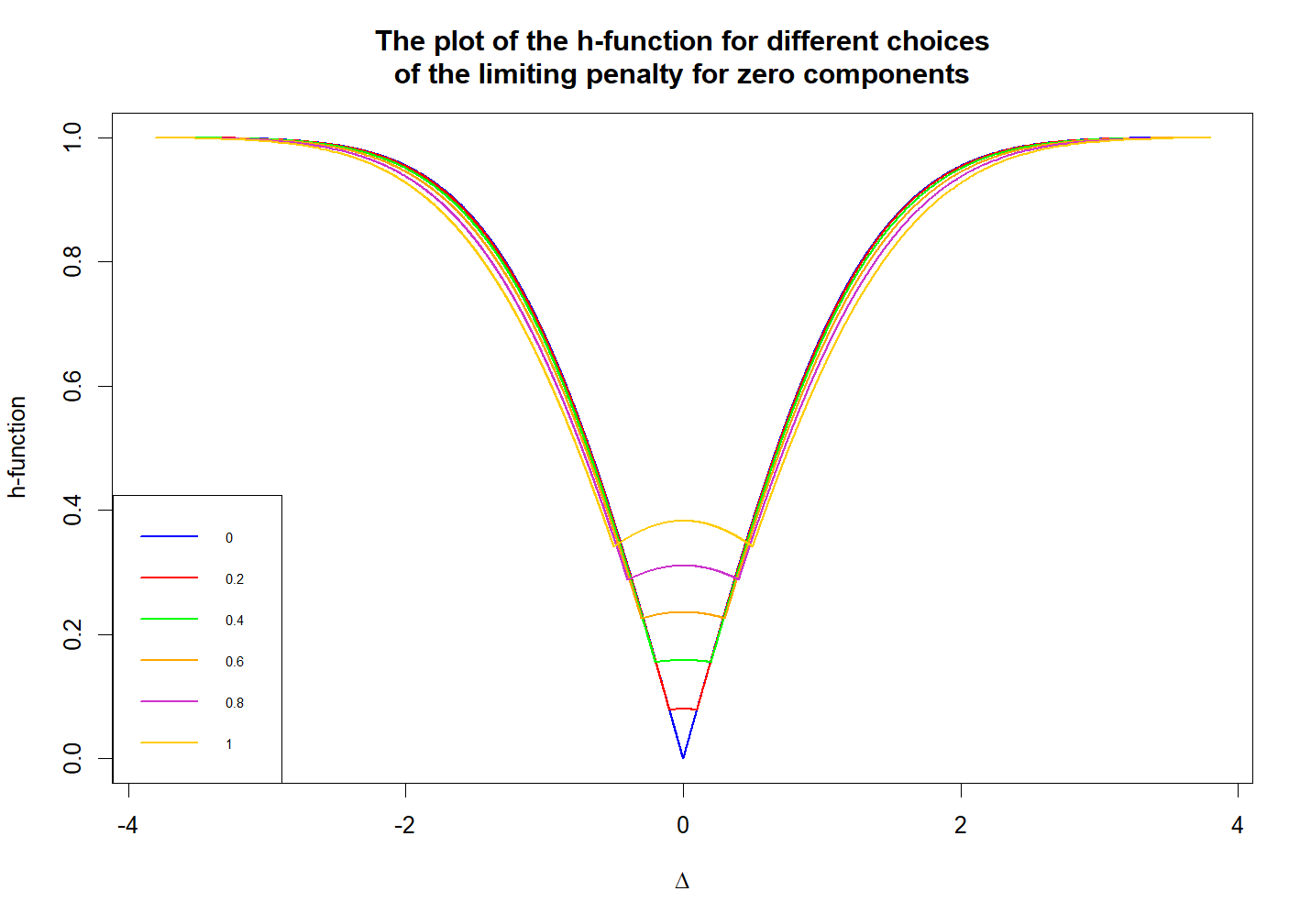}
    \includegraphics[width = 0.4\linewidth, height=0.4\linewidth]{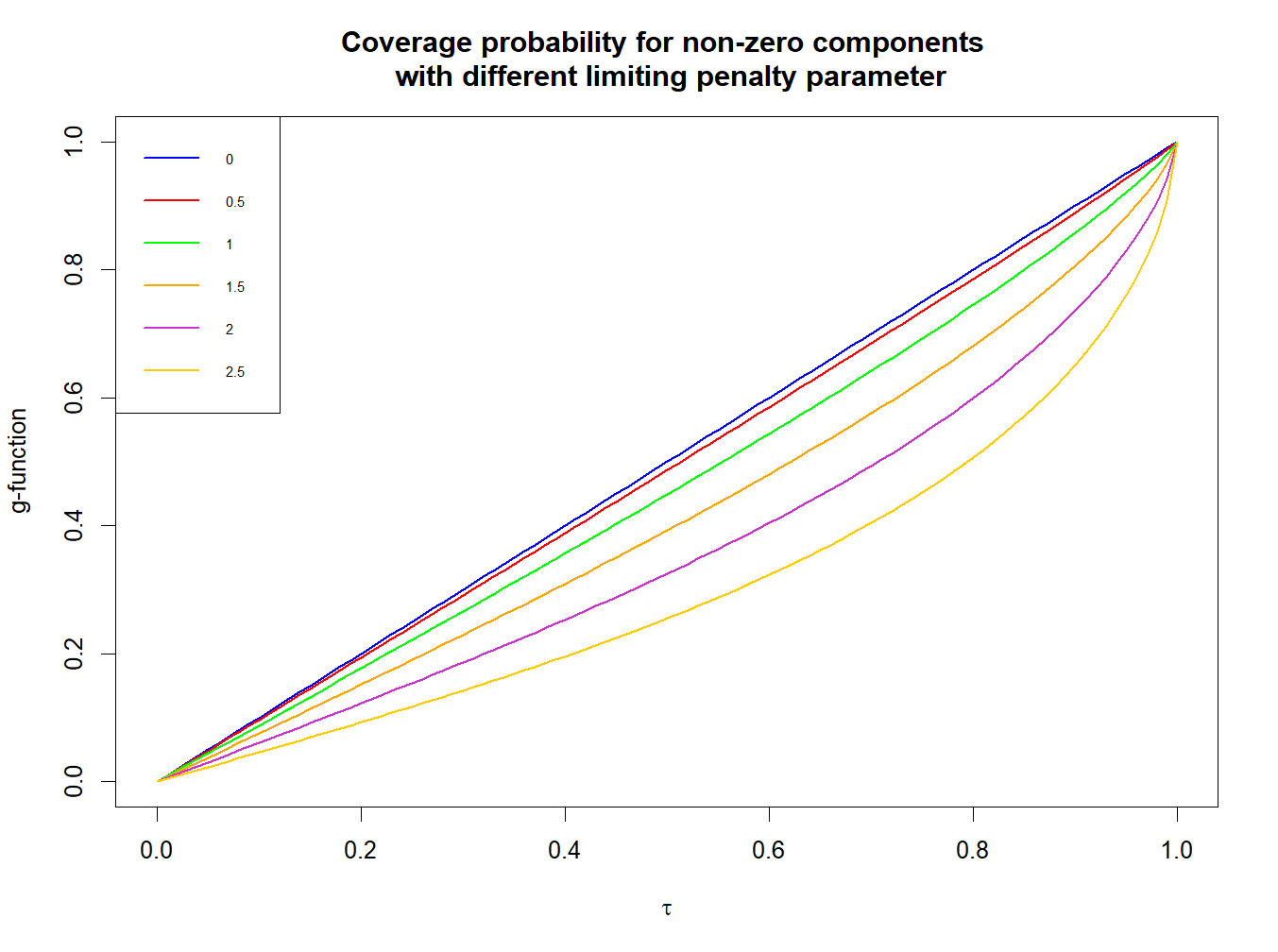}
    \includegraphics[width = 0.4\linewidth, height=0.4\linewidth]{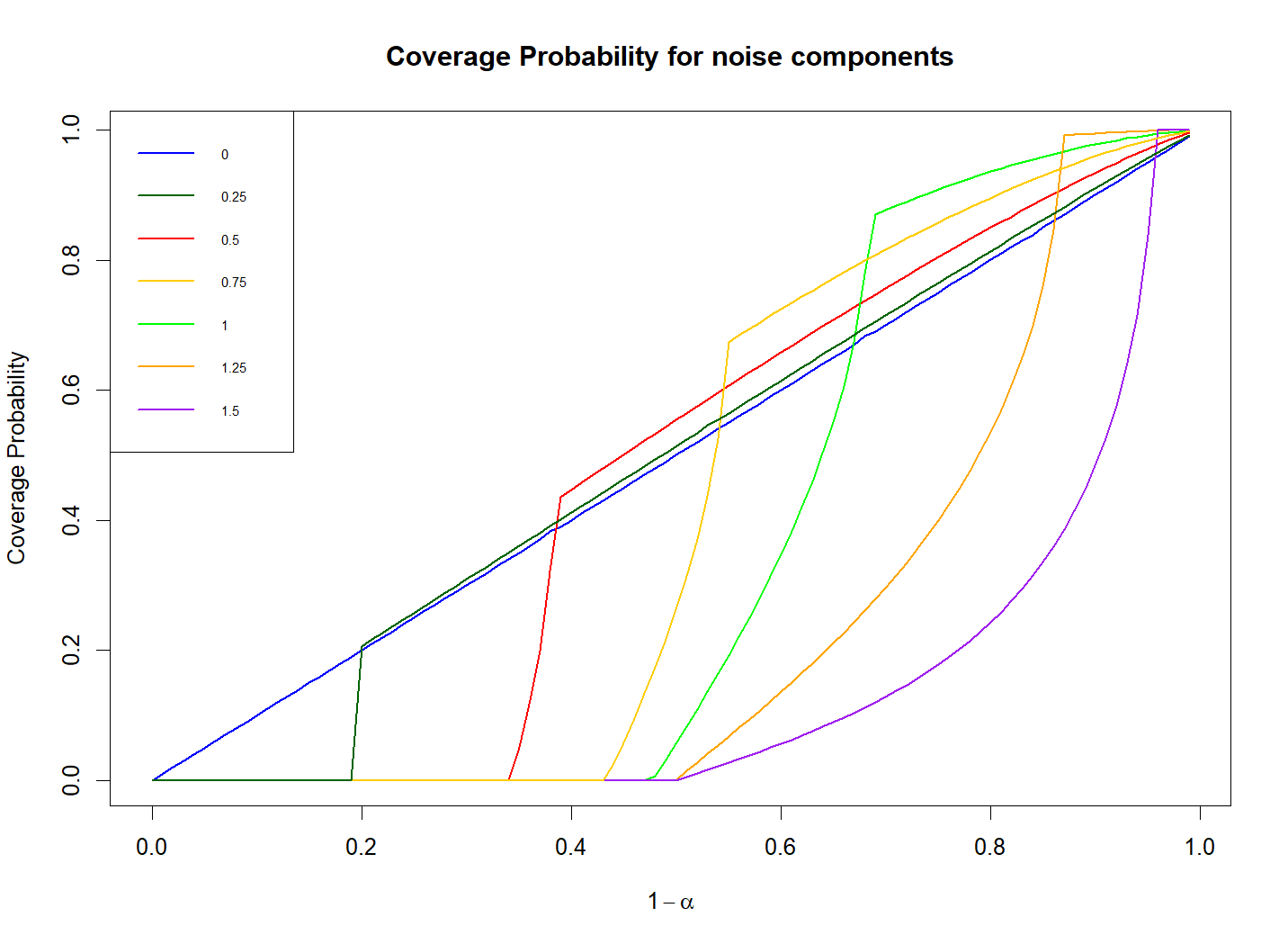}
    \caption{The upper panel shows the graph of $h_+(\lambda_0,\zeta)$ for the positive coefficients (\textit{left}), and that of $h(\lambda_0,\zeta)$ for the zero-valued components (\textit{right}) as a function of $\zeta$ for various values of $\lambda_0$. The lower panel shows coverage as a function of the credibility level for non-zero (\textit{left}) and zero (\textit{right}) components for various values  $\lambda_0$. The black diagonal line corresponds to the exact coverage.} 
    \label{fig:small_p_coverage}
\end{figure}

\begin{figure}[htbp]
    \centering
    \includegraphics[width = 0.7\linewidth, height = 0.5\linewidth]{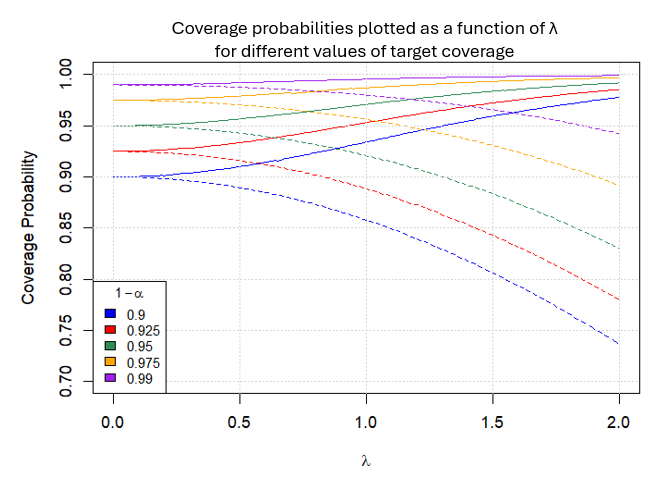}
    \caption{Plot of coverage probabilities against $\lambda$ for positive components (\textit{dashed lines}) and zero components (\textit{solid lines}) for a few interesting values of $1-\alpha$.}
    \label{fig:cov_vs_lam_for_diff_alphas}
\end{figure}

Based on the above discussion and the figures displayed above, we provide a guideline for choosing the confidence level for a wide range of coverage requirements. In \Cref{tab:my-tab_lambda0_guideline}, we give the value of $\gamma$ that attains the limiting coverage $1-\alpha$ for any penalty parameter between 0.05 to 4. For any value of the penalty parameter not listed in \Cref{tab:my-tab_lambda0_guideline}, the \texttt{solve\_gamma} function in the \texttt{credInt R} package \citep{credInt2024} may be used. Moreover, from the plot of $\psi_0(\alpha,\lambda)$ in \Cref{fig:small_p_coverage}, we can ensure higher than $(1-\alpha)$-coverage for the zero components for the chosen $(1-\gamma)$ level credible intervals. Thus, choosing $\gamma$ according to the table attains the limiting coverage exactly $1-\alpha$ for a non-zero parameter value and at least $1-\alpha$ for a zero value of the parameter.

\begin{table}[htbp]
\centering
\begin{tabular}{c|ccccc}
\hline
\backslashbox{$\lambda$}{$1-\alpha$} 
& 0.9    & 0.925  & 0.95   & 0.975  & 0.99   \\ \hline
0.05 & 0.9001 & 0.9251 & 0.9501 & 0.9751 & 0.9900 \\
0.1  & 0.9004 & 0.9254 & 0.9503 & 0.9752 & 0.9901 \\
0.15 & 0.9009 & 0.9258 & 0.9506 & 0.9754 & 0.9902 \\
0.2  & 0.9017 & 0.9264 & 0.9511 & 0.9757 & 0.9904 \\
0.25 & 0.9026 & 0.9272 & 0.9517 & 0.9761 & 0.9906 \\
0.3  & 0.9037 & 0.9282 & 0.9525 & 0.9765 & 0.9909 \\
0.35 & 0.9050 & 0.9293 & 0.9534 & 0.9771 & 0.9910 \\
0.4  & 0.9066 & 0.9306 & 0.9543 & 0.9777 & 0.9914 \\
0.45 & 0.9082 & 0.9320 & 0.9554 & 0.9784 & 0.9917 \\
0.5  & 0.9100 & 0.9335 & 0.9566 & 0.9790 & 0.9920 \\
0.55 & 0.9120 & 0.9352 & 0.9578 & 0.9798 & 0.9924 \\
0.6  & 0.9141 & 0.9369 & 0.9591 & 0.9806 & 0.9927 \\
0.65 & 0.9163 & 0.9387 & 0.9605 & 0.9814 & 0.9931 \\
0.7  & 0.9186 & 0.9406 & 0.9619 & 0.9822 & 0.9934 \\
0.75 & 0.9210 & 0.9426 & 0.9634 & 0.9830 & 0.9938 \\
0.8  & 0.9235 & 0.9446 & 0.9649 & 0.9838 & 0.9942 \\
0.85 & 0.9262 & 0.9468 & 0.9664 & 0.9846 & 0.9945 \\
0.9  & 0.9288 & 0.9488 & 0.9679 & 0.9854 & 0.9948 \\
0.95 & 0.9314 & 0.9510 & 0.9694 & 0.9862 & 0.9952 \\
1    & 0.9340 & 0.9530 & 0.9708 & 0.9871 & 0.9955 \\
1.1  & 0.9394 & 0.9572 & 0.9737 & 0.9885 & 0.9961 \\
1.2  & 0.9446 & 0.9613 & 0.9765 & 0.9899 & 0.9966 \\
1.3  & 0.9498 & 0.9652 & 0.9791 & 0.9912 & 0.9971 \\
1.4  & 0.9546 & 0.9688 & 0.9815 & 0.9923 & 0.9976 \\
1.5  & 0.9593 & 0.9722 & 0.9837 & 0.9934 & 0.9979 \\
1.6  & 0.9636 & 0.9754 & 0.9857 & 0.9943 & 0.9982 \\
1.7  & 0.9676 & 0.9783 & 0.9875 & 0.9951 & 0.9985 \\
1.8  & 0.9713 & 0.9810 & 0.9891 & 0.9958 & 0.9987 \\
1.9  & 0.9746 & 0.9833 & 0.9906 & 0.9964 & 0.9989 \\
2    & 0.9776 & 0.9854 & 0.9918 & 0.9959 & 0.9991 \\
2.2  & 0.9828 & 0.9889 & 0.9939 & 0.9978 & 0.9994 \\
2.4  & 0.9869 & 0.9917 & 0.9956 & 0.9984 & 0.9996 \\
2.6  & 0.9901 & 0.9939 & 0.9968 & 0.9988 & 0.9997 \\
2.8  & 0.9927 & 0.9955 & 0.9977 & 0.9992 & 0.9998 \\
3    & 0.9946 & 0.9967 & 0.9983 & 0.9994 & 0.9998 \\
3.5  & 0.9976 & 0.9986 & 0.9993 & 0.9998 & 0.9999 \\
4    & 0.9990 & 0.9994 & 0.9997 & 0.9999 & 1.000  \\ \hline
\end{tabular}
\caption{Calibration table for credibility level for an intended asymptotic coverage.}
\label{tab:my-tab_lambda0_guideline}
\end{table}

\begin{remark} \rm 
The component-wise credible intervals can be further used to get joint credible hyper-rectangle. Fix $k \leq p$ components. Suppose we want a $1-\beta$ credible region for those $k$ components. Then, we first build the individual $1-\alpha$ credible intervals for each of those $k$ chosen components using the sparse-projection posterior method with the corresponding credibility levels calculated from Table~\ref{tab:my-tab_lambda0_guideline}. Using the asymptotic independence resulting from the diagonal nature of the limiting covariance matrix, the asymptotic coverage of the joint $k-$dimensional hyper-rectangle thus formed is $(1-\alpha)^k$. Consequently, choosing $1-\alpha = (1-\beta)^{1/k}$, the targeted asymptotic coverage $1-\beta$ for the joint credible region can be achieved. 
\end{remark}

\section{Numerical Results}
\label{simulation}

We conduct a small-scale simulation to capture the performance of the credible sets obtained by projecting the conjugate posterior samples to the sparse subspace. We study several scenarios and report a complete numerical performance encompassing the dimension, the sparsity, and the signal strength of the underlying model. First, fixing the true sparsity at $s = 5$ and dimension at $p = 20$, we first generate data using an independent design matrix and signal strength $(-2,-1.5,0.5,1,2)$. We repeat this study with a correlated design matrix where an autoregressive process with a correlation of $0.7$ has been used. We study the coverage and length of the $(1-\gamma)$ credible intervals of the 5 signals and 5 randomly chosen noise variables. Here, $\gamma$ is chosen in accordance with \Cref{tab:my-tab_lambda0_guideline} to achieve our target coverage of 0.95. We will need the value of the penalty parameter used in the projection map to use the calibration information in \Cref{tab:my-tab_lambda0_guideline}. To specifically know what $\gamma$ value returns the credible intervals with exact $(1-\alpha)$ coverage, the \texttt{solve\_gamma} function within the \texttt{credInt R} package \citep{credInt2024} can be used. For all the simulations, we implement the LASSO regression of $\bY$ on $\bX$ and use the cross-validated penalty parameter in all the $R = 5000$ posterior maps. These cases are observed under three sample sizes, $n = 500,\,1000$, and $5000$. The \texttt{credInt} function returns the $(1-\gamma)$ credible intervals for all the regression coefficients. We repeat the simulations $200$ times and report the average coverage values. The goal is to see if the coverage of the component-wise credible intervals increases and attains the desired level with the increase in sample size. 

Besides this, we study the impact of sparsity on coverage for different significance levels. We vary $s_0$ from 5 to 30 and consider three levels, 90\%, 95\%, and 99\%, to see how coverage changes as sparsity increases and whether the results improve when the sample size increases from 500 to 1000. Note that the design matrix in this case is independent, and the error is normally distributed with error variance 1.

\begin{table}[htbp]
\centering
\resizebox{\linewidth}{!}{%
\begin{tabular}{ccccccccccccc}
\hline
Design & Model &  & $S_1$ & $S_2$ & $S_3$ & $S_4$ & $S_5$ & $N_1$ & $N_2$ & $N_3$ & $N_4$ & $N_5$ \\ \hline
\multirow{6}{*}{\rotatebox[origin=c]{90}{Independent}} & n = 500 & Coverage & 0.943 & 0.938 & 0.942 & 0.935 & 0.937 & 0.948 & 0.944 & 0.952 & 0.943 & 0.947 \\
 & p = 20 & Length & 0.112 & 0.108 & 0.119 & 0.110 & 0.111 & 0.113 & 0.109 & 0.112 & 0.117 & 0.116 \\ \cline{2-13} 
 & n = 1000 & Coverage & 0.949 & 0.945 & 0.947 & 0.951 & 0.948 & 0.955 & 0.953 & 0.954 & 0.950 & 0.957 \\
 & p = 20 & Length & 0.117 & 0.114 & 0.114 & 0.113 & 0.115 & 0.113 & 0.114 & 0.116 & 0.113 & 0.114 \\ \cline{2-13} 
 & n = 5000 & Coverage & 0.955 & 0.951 & 0.949 & 0.951 & 0.952 & 0.958 & 0.957 & 0.962 & 0.960 & 0.961 \\
 & p = 20 & Length & 0.114 & 0.119 & 0.110 & 0.116 & 0.115 & 0.115 & 0.116 & 0.118 & 0.119 & 0.116 \\ \hline
\multirow{6}{*}{\rotatebox[origin=c]{90}{Correlated}} & n = 500 & Coverage & 0.933 & 0.937 & 0.931 & 0.933 & 0.932 & 0.945 & 0.944 & 0.944 & 0.941 & 0.942 \\
 & p = 20 & Length & 0.119 & 0.121 & 0.117 & 0.118 & 0.119 & 0.119 & 0.115 & 0.116 & 0.118 & 0.120 \\ \cline{2-13} 
 & n = 1000 & Coverage & 0.943 & 0.948 & 0.939 & 0.942 & 0.942 & 0.956 & 0.954 & 0.953 & 0.950 & 0.953 \\
 & p = 20 & Length & 0.121 & 0.122 & 0.116 & 0.117 & 0.119 & 0.117 & 0.112 & 0.116 & 0.122 & 0.121 \\ \cline{2-13} 
 & n = 5000 & Coverage & 0.951 & 0.952 & 0.947 & 0.949 & 0.950 & 0.957 & 0.958 & 0.958 & 0.957 & 0.960 \\
 & p = 20 & Length & 0.118 & 0.117 & 0.115 & 0.120 & 0.118 & 0.130 & 0.127 & 0.123 & 0.119 & 0.125 \\ \hline
\end{tabular}%
}
\caption{Coverage and length of $(1-\gamma)$ credible interval of $5$ signals and randomly chosen $5$ noise variables studied over $3$ sample sizes, under the independent and correlated design settings with target coverage $0.95$.}
\label{tab:my-table_ind_and_corr}
\end{table}

\Cref{tab:my-table_ind_and_corr} provides the coverage and lengths of the component-wise intervals for the cases ($p = 20 $ and $ \btheta^0_{S_0} = (-2,-1.5,0.5,1,1.5)$) under the two setups, namely, independent design and correlated design. We observe that in both cases, there is a general pattern that coverage increases with an increase in sample size and almost attains the desired level, which is 95\% in this case. The numerical results thus align with the theoretical findings and provide enough support that the projection method leads to almost accurate asymptotic coverage. It is noteworthy that although the theoretical results for this method have only been proven under a limiting independence condition on the covariates, the method is capable of performing well even when some correlation is present between the explanatory variables in the model.

\Cref{fig:cover_vs_sparsity}, on the other hand, depicts the coverage property as the model becomes sparser. We see the general tendency is that the coverage slightly degrades as the number of truly active predictors decreases. However, the solid line (showing the coverage for $n = 1000$) stays uniformly above the dashed line (showing coverage for $n = 500$) for all three significance levels. This guarantees that the asymptotic full coverage will hold as the sample size grows.  Moreover, the sparse projection-posterior method is much more flexible in terms of coverage as the choice of the $\lambda_0$ values is user-dependent.

\begin{figure}[htbp]
    \centering
    \includegraphics[width = 0.6\linewidth]{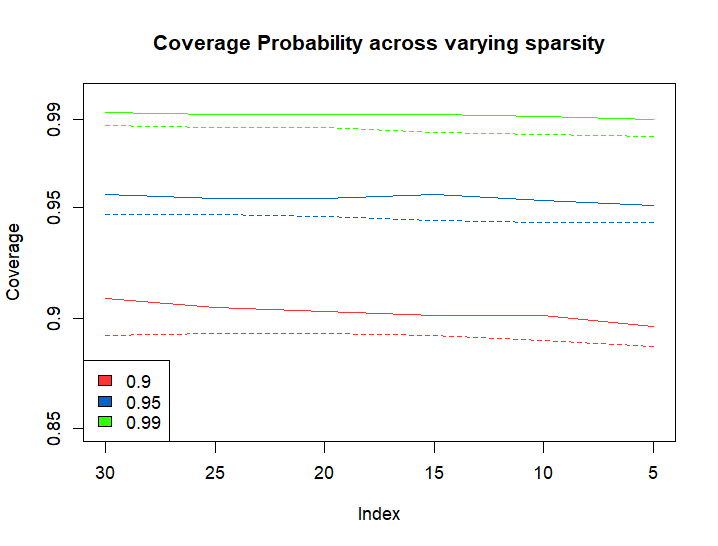}
    \caption{The average signal coverage of $\tau$-credible regions for three different levels $\tau = 0.9,0.95,0.99$ (averaged over the $s$ signals) are plotted against the degree of sparsity $s = 30,25,20,15,10,5$ with two choices of the sample size $n = 500, 1000$. The signal strength in all cases was $\theta^0_j = 1$ for all $j \in S = \{1,\dots,s\}$ and the results are based on $200$ replications.}
    \label{fig:cover_vs_sparsity}
\end{figure}

\begin{figure}[htbp]
    \centering
    \includegraphics[width = \linewidth]{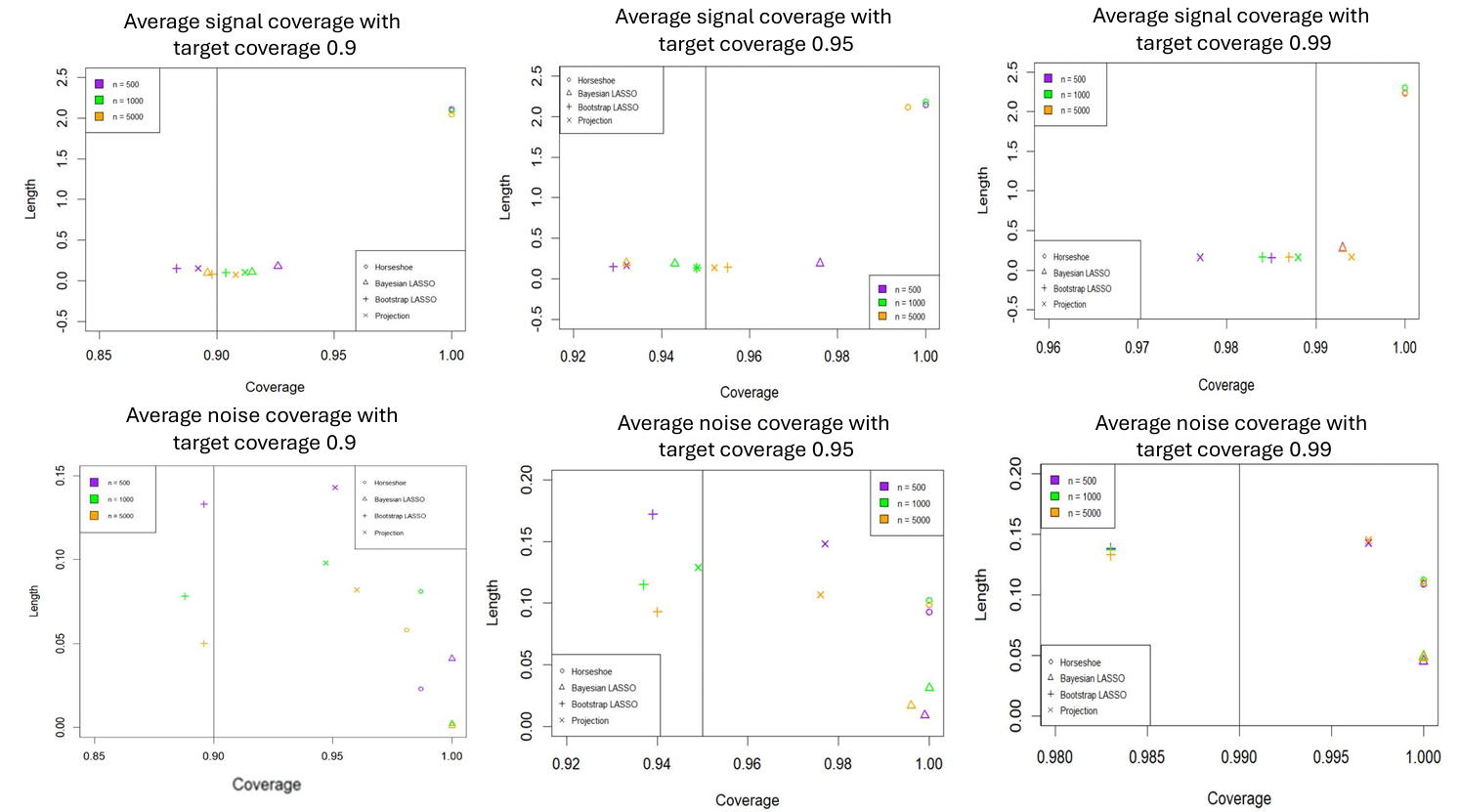}
    \caption{Comparison of Horseshoe, Bayesian LASSO, Bootstrap LASSO average signal (\textit{upper panel}) and noise (\textit{lower panel}) coverage with dimension $p = 50$, sparsity $s = 20$ and signal intensity $\beta_j^0 = 1.5$ for all $j = 1,2,\dots,s$, corresponding to three sample sizes, $n = 500,1000$ and $2000$.}
    \label{fig:p50_beta15_s20}
\end{figure}

\begin{figure}[htbp]
    \centering
    \includegraphics[width = \linewidth]{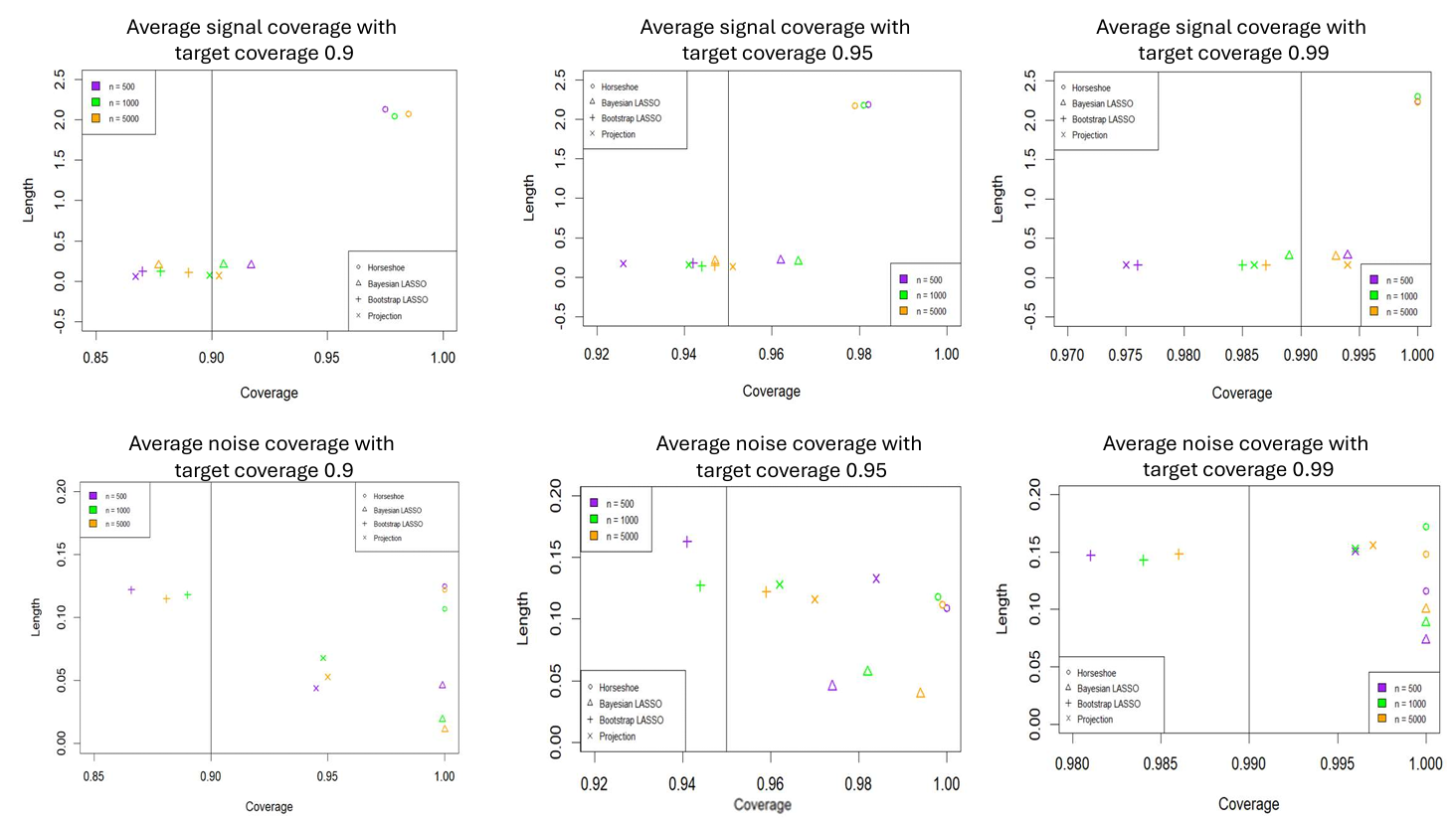}
    \caption{Comparison of Horseshoe, Bayesian LASSO, Bootstrap LASSO average signal (\textit{upper panel}) and noise (\textit{lower panel}) coverage with dimension $p = 100$, sparsity $s = 20$ and signal intensity $\beta_j^0 = 1.5$ for all $j = 1,2,\dots,s$, corresponding to three sample sizes, $n = 500,1000$ and $2000$.}
    \label{fig:p100_beta15_s20}
\end{figure}

Next, in Figures \ref{fig:p50_beta15_s20} and \ref{fig:p100_beta15_s20}, we compare the average signal and noise coverage and length of credible intervals for the Bayesian methods of Bayesian LASSO, Horseshoe, and the proposed sparse projection-posterior method, along with the frequentist method of Bootstrap LASSO. We repeat the study for three levels of significance $1-\alpha = 0.9, 0.95, 0.99$, and for two dimensions $p = 50,100$. We observe whether there is any change in the coverage over three sample sizes $n = 500,1000,5000$, keeping the sparsity and the signal intensity fixed at $s = 20$ and $\beta_j^0 = 1.5, \enskip j = 1,2,\dots,s$, respectively. Again, all the projections computed used the LASSO cross-validated penalty parameter. We drew $R = 5000$ posterior samples. The corresponding $\gamma$ was chosen from \Cref{tab:my-tab_lambda0_guideline}. Although the Bayesian LASSO does well in covering the true parameter values for low dimensions, there are two more aspects to consider. Firstly, the Bayesian LASSO does not have a model selection ability and always returns the full feature space. This causes a problem in identifying weaker signals. Secondly, the method is computationally infeasible when the dimension exceeds 100. On the other hand, the horseshoe method provides over-coverage for any choice of $1-\alpha$ at the high cost of yielding very long credible intervals. Accounting for these drawbacks, the sparse projection-posterior method is the only Bayesian method capable of providing exact coverage with reasonable lengths and comparable to the bootstrap LASSO technique of \cite{chatterjee2011bootstrapping}. Its coverage is sometimes even better than the bootstrap LASSO method, especially for larger sample sizes. Moreover, the bootstrap LASSO often fails to cover the noise variables, but the sparse-projection posterior method successfully provides coverage. As predicted by the theory, the cost is a slightly elongated interval. 

\section{Conclusion}\label{conclusion}

This study shows that the proposed Bayesian technique of projecting a dense conjugate posterior sample to a sparser domain performs well, offering adequate coverage by a modified credible under the fixed-dimensional setting. The method's ability to achieve reliable and consistent coverage holds promising implications for applications in various fields, as it successfully overcomes the drawbacks of existing methodologies to quantify the associated uncertainties in sparse regression. This technique may be extended to address the high dimensional setting and correlated predictors, and the method's robustness under different data distributions may be explored.

\begin{appendix}
\section*{Appendix: Proofs}\label{appn} 
\setcounter{equation}{0} 

\begin{lemma}\label{ridge CLT}
    Under Assumptions~\ref{predictor} and \ref{tuning}, and $a_n/\sqrt{n}\to 0$, we have that $\sqrt{n}(\hat{\btheta}^\mathrm{R} - \btheta^0) \rightsquigarrow \normal_p(\bm{0},\sigma_0^2\bC^{-1} )$. 
\end{lemma}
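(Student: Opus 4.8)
The plan is to write $\hat{\btheta}^{\mathrm{R}}-\btheta^0$ as the sum of a deterministic bias term and a stochastic term, show the bias vanishes after scaling by $\sqrt n$, and apply a multivariate Lindeberg--Feller central limit theorem to the stochastic term. Substituting $\bY=\bX\btheta^0+\boldeps$ into $\hat{\btheta}^{\mathrm{R}}=(\bX^{\mathrm{T}}\bX+a_n\bm I_p)^{-1}\bX^{\mathrm{T}}\bY$ and using the algebraic identity $(\bX^{\mathrm{T}}\bX+a_n\bm I_p)^{-1}\bX^{\mathrm{T}}\bX-\bm I_p=-a_n(\bX^{\mathrm{T}}\bX+a_n\bm I_p)^{-1}$, I would obtain
\begin{align*}
\sqrt n(\hat{\btheta}^{\mathrm{R}}-\btheta^0)=-\frac{a_n}{\sqrt n}\Big(\bC_n+\frac{a_n}{n}\bm I_p\Big)^{-1}\btheta^0+\Big(\bC_n+\frac{a_n}{n}\bm I_p\Big)^{-1}\,\frac{1}{\sqrt n}\bX^{\mathrm{T}}\boldeps,
\end{align*}
after writing $\bX^{\mathrm{T}}\bX=n\bC_n$.

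For the first (bias) term, since $a_n/\sqrt n\to 0$ forces $a_n/n\to 0$, and $\bC_n\to\bC$ with $\bC$ positive definite, the matrix $(\bC_n+(a_n/n)\bm I_p)^{-1}$ converges to $\bC^{-1}$; multiplied by the scalar $a_n/\sqrt n\to 0$ this term converges to $\bm 0$ deterministically, hence in probability. For the second term, I would note $n^{-1/2}\bX^{\mathrm{T}}\boldeps=n^{-1/2}\sum_{i=1}^n\bX_i\varepsilon_i$ is a sum of independent mean-zero vectors with covariance $\sigma_0^2\bC_n\to\sigma_0^2\bC$, and verify the Lindeberg condition: for fixed $\delta>0$, on $\{\|\bX_i\varepsilon_i\|>\delta\sqrt n\}$ one has $|\varepsilon_i|>\delta\sqrt n/\max_k\|\bX_k\|=:M_n\to\infty$ by the second part of Assumption~\ref{predictor}, so
\begin{align*}
\frac1n\sum_{i=1}^n\E\big[\|\bX_i\varepsilon_i\|^2\mathbbm 1(\|\bX_i\varepsilon_i\|>\delta\sqrt n)\big]\le\Big(\frac1n\sum_{i=1}^n\|\bX_i\|^2\Big)\,\E\big[\varepsilon_1^2\mathbbm 1(|\varepsilon_1|>M_n)\big]\to 0,
\end{align*}
because $n^{-1}\sum_i\|\bX_i\|^2=\mathrm{tr}(\bC_n)\to\mathrm{tr}(\bC)<\infty$ and $\E[\varepsilon_1^2\mathbbm 1(|\varepsilon_1|>M_n)]\to 0$ by dominated convergence using $\E\varepsilon_1^2=\sigma_0^2<\infty$. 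The Lindeberg--Feller theorem then gives $n^{-1/2}\bX^{\mathrm{T}}\boldeps\rightsquigarrow\normal_p(\bm 0,\sigma_0^2\bC)$.

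Finally, combining the convergence $(\bC_n+(a_n/n)\bm I_p)^{-1}\to\bC^{-1}$ with the CLT via Slutsky's theorem yields $(\bC_n+(a_n/n)\bm I_p)^{-1}n^{-1/2}\bX^{\mathrm{T}}\boldeps\rightsquigarrow\bC^{-1}\normal_p(\bm 0,\sigma_0^2\bC)=\normal_p(\bm 0,\sigma_0^2\bC^{-1})$, and adding the $o_p(1)$ bias term leaves the limit unchanged. The only step requiring genuine care is the verification of the Lindeberg condition, which is where the hypothesis $n^{-1}\max_i\|\bX_i\|^2\to 0$ is essential; everything else is routine linear algebra and Slutsky-type bookkeeping.
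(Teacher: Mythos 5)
Your proposal is correct and follows essentially the same route as the paper: decompose $\sqrt{n}(\hat{\btheta}^{\mathrm{R}}-\btheta^0)$ into a ridge bias term killed by $a_n/\sqrt n\to 0$ plus $(\bC_n+(a_n/n)\bm I_p)^{-1}n^{-1/2}\bX^{\mathrm{T}}\boldeps$, and apply a Lindeberg central limit theorem to the latter using $n^{-1}\max_i\|\bX_i\|^2\to 0$. The only cosmetic difference is that the paper verifies the univariate Lindeberg condition for each linear combination $\bm b^{\mathrm{T}}\bX_i\varepsilon_i$ and invokes the Cram\'er--Wold device (after first disposing of the Gaussian-error case by exact normality), whereas you check the multivariate Lindeberg condition directly on the summands $\bX_i\varepsilon_i$; both verifications are valid and rest on the same assumptions.
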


\begin{proof}
    If $\boldsymbol\varepsilon$ is normally distributed, representing  $$\hat{\btheta}^\mathrm{R} = (\bX^\mathrm{T}\bX + a_n \bm{I}_p)^{-1}\bX^\mathrm{T}\bX \btheta^0 + (\bX^\mathrm{T}\bX + a_n \bm{I}_p)^{-1}\bX^\mathrm{T} \bm{\varepsilon},$$ we observe that $\hat{\btheta}^\mathrm{R}$ is also normally distributed and hence it suffices to show that 
    \begin{align*} 
    \sqrt{n}(\mathbb{E}_0 (\hat\btheta^{\mathrm{R}})-\btheta^0) &=\sqrt{n} ((\bX^\mathrm{T}\bX + a_n \bm{I}_p)^{-1}\bX^\mathrm{T}\bX \btheta^0 -\btheta^0) 
    \to \bm{0}, \\
n   \mathbb{E}_0 [(\hat\btheta^{\mathrm{R}}-\btheta^0) (\hat\btheta^{\mathrm{R}}-\btheta^0)^{\mathrm{T}}] &=
    n\sigma_0^2(\bX^\mathrm{T}\bX + a_n \bm{I}_p)^{-1}\bX^\mathrm{T}\bX (\bX^\mathrm{T}\bX + a_n \bm{I}_p)^{-1} \\ 
    & \to \sigma_0^2 \bC^{-1}. 
    \end{align*}
    Since $n^{-1}\bX^\mathrm{T}\bX \to \bC$ and $a_n/\sqrt{n}\to 0$, we have $ (\bX^\mathrm{T}\bX + a_n \bm{I}_p)^{-1} \bX^\mathrm{T}\bX=\bm{I}_p+\bm{o}(n^{-1/2})$; here and elsewhere $\bm{o}(n^{-1/2})$ stands for a $p\times p$-matrix with all entries $o(n^{-1/2})$. The first relation now follows. The second relation holds even under the weaker condition $a_n/n\to 0$. 
    
   In general, when $\bm{\varepsilon}$ may not be normal, we verify, for any fixed non-zero linear combination $\bm{b}^{\mathrm{T}}\hat\btheta$, Lindeberg's condition for the central limit theorem. The condition reduces to 
   $(\max_{1\le i\le n} |\bm{b}^{\mathrm{T}} \bX_i|^2)/\sum_{i=1}^n  |\bm{b}^{\mathrm{T}} \bX_i|^2\to 0$, 
   owing to the identical distribution of the error variables, invoke the  Cram\'er-Wold device. 
   Since $$\max_{1\le i\le n}  n^{-1} |\bm{b}^{\mathrm{T}} \bX_i|^2\le \|\bm{b}\|^2 \max_{1\le i\le n}  n^{-1} |\bX_i\|^2\to 0, \; n^{-1}\sum_{i=1}^n |\bm{b}^{\mathrm{T}} \bX_i|^2 \to \bm{b}^{\mathrm{T}} \bC\bm{b}>0$$ by Assumption~\ref{predictor}, Lindeberg's condition is verified. 
  As $$\mathrm{Cov}(n^{-1/2} \sum_{i = 1}^n x_{ij}\varepsilon_i, n^{-1/2} \sum_{i = 1}^n x_{ik}\varepsilon_i) = n^{-1} \sigma_0^2 \sum_{i = 1}^n x_{ij}x_{ik} \to \sigma_0^2 c_{jk},$$ the multivariate central limit theorem applies to $(n^{-1/2} \sum_{i = 1}^n x_{ij}\varepsilon_i:1\le j\le p)$ with mean the zero vector and the dispersion matrix $\sigma_0^2 \bC$. Hence it follows that $(\bX^\mathrm{T}\bX + a_n \bm{I}_p)^{-1}\bX^\mathrm{T} \bm{\varepsilon}\rightsquigarrow \normal_p(\bm{0},\sigma_0^2 \bC^{-1})$.
\end{proof}

\begin{lemma}
    \label{error variance}
    Under Assumptions~\ref{predictor} and \ref{tuning}, and $a_n/\sqrt{n}\to 0$, the posterior distribution of $\sigma$ contracts at $\sigma_0$ at the rate $n^{-1/2}$, that is, for any $M_n\to \infty$
    \begin{align}
\label{posterior consistency sigma}
    \mathbb{E}_0 \Pi (|\sigma-\sigma_0|> M_n n^{-1/2}|\bY)\to 0.
\end{align}
\end{lemma}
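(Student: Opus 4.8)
The plan is to read the conclusion off the explicit Gamma form of the marginal posterior of $\sigma$ in \eqref{posterior sigma}. Put $\tau=\sigma^{-2}$ and $\bm{H}_n=\bX(\bX^{\mathrm T}\bX+a_n\bm{I}_p)^{-1}\bX^{\mathrm T}$, so that $\tau|\bY\sim\mathrm{Ga}(\alpha_n,\beta_n)$ with $\alpha_n=b_1+n/2$ and $\beta_n=b_2+\tfrac12\bY^{\mathrm T}(\bm{I}_n-\bm{H}_n)\bY$. The first, and main, step is to show $\beta_n=\tfrac12 n\sigma_0^2+O_{\mathbb{P}_0}(\sqrt n)$. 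Writing $\bm{I}_n-\bm{H}_n=(\bm{I}_n-\bm{H}_n)^2+(\bm{H}_n-\bm{H}_n^2)$ and noting $\bX\hat\btheta^{\mathrm R}=\bm{H}_n\bY$, we get $\bY^{\mathrm T}(\bm{I}_n-\bm{H}_n)\bY=\|\bY-\bX\hat\btheta^{\mathrm R}\|^2+\bY^{\mathrm T}(\bm{H}_n-\bm{H}_n^2)\bY$. The eigenvalues of $\bm{H}_n-\bm{H}_n^2=a_n\bX(\bX^{\mathrm T}\bX+a_n\bm{I}_p)^{-2}\bX^{\mathrm T}$ are $a_n s_k^2/(s_k^2+a_n)^2\le a_n/s_k^2$, where the $s_k^2$ are the eigenvalues of $\bX^{\mathrm T}\bX$, which are $\asymp n$ by Assumption~\ref{predictor}; hence $\bm{H}_n-\bm{H}_n^2$ has operator norm $O(a_n/n)$, and since $\|\bY\|^2=O_{\mathbb{P}_0}(n)$ the correction term is $O_{\mathbb{P}_0}(a_n)=o_{\mathbb{P}_0}(\sqrt n)$ by $a_n/\sqrt n\to 0$. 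Finally $\|\bY-\bX\hat\btheta^{\mathrm R}\|^2=n\hat\sigma^2=n\sigma_0^2+O_{\mathbb{P}_0}(\sqrt n)$, which is precisely the $\sqrt n$-consistency of the ridge-based variance estimator recorded in the text.

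With $\beta_n$ in hand, the first two posterior moments of $\tau$ are $\mathbb{E}[\tau|\bY]=\alpha_n/\beta_n=\sigma_0^{-2}+O_{\mathbb{P}_0}(n^{-1/2})$ and $\var[\tau|\bY]=\alpha_n/\beta_n^2=O_{\mathbb{P}_0}(n^{-1})$. Chebyshev's inequality under the posterior then gives, for any $M_n\to\infty$, $\Pi\big(|\tau-\mathbb{E}[\tau|\bY]|>\tfrac12 M_n n^{-1/2}\,|\,\bY\big)\le 4n\var[\tau|\bY]/M_n^2=O_{\mathbb{P}_0}(M_n^{-2})$. Since the bias $|\mathbb{E}[\tau|\bY]-\sigma_0^{-2}|$ is eventually below $\tfrac12 M_n n^{-1/2}$ on events of $\mathbb{P}_0$-probability tending to one, it follows that $\Pi(|\tau-\sigma_0^{-2}|>M_n n^{-1/2}\,|\,\bY)=o_{\mathbb{P}_0}(1)$ (on the complementary bad event bound the posterior probability by $1$).

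It remains to transfer the contraction from $\tau=\sigma^{-2}$ to $\sigma$. The map $g(t)=t^{-1/2}$ is Lipschitz with a constant $L$ on a fixed closed neighborhood $U$ of $\sigma_0^{-2}$ (here $\sigma_0>0$ is used), so on $\{\tau\in U\}$ one has $|\sigma-\sigma_0|\le L|\tau-\sigma_0^{-2}|$; consequently $\Pi(|\sigma-\sigma_0|>M_n n^{-1/2}\,|\,\bY)\le\Pi(|\tau-\sigma_0^{-2}|>L^{-1}M_n n^{-1/2}\,|\,\bY)+\Pi(\tau\notin U\,|\,\bY)$, and both terms are $o_{\mathbb{P}_0}(1)$ by the previous paragraph. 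Finally, these posterior probabilities are bounded by $1$ and tend to $0$ in $\mathbb{P}_0$-probability, so the bounded convergence theorem yields $\mathbb{E}_0\Pi(|\sigma-\sigma_0|>M_n n^{-1/2}\,|\,\bY)\to 0$, establishing \eqref{posterior consistency sigma}.

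The main obstacle is the first step: showing that the quadratic form $\bY^{\mathrm T}(\bm{I}_n-\bm{H}_n)\bY$ concentrates around $n\sigma_0^2$ at the $\sqrt n$ rate, for which one must absorb the ridge correction (the non-idempotency of $\bm{H}_n$, controlled by $a_n/\sqrt n\to 0$) and reduce the remainder to the $\sqrt n$-consistent residual sum of squares; everything after that is routine Gamma algebra together with a smooth change of variables.
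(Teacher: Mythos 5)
Your proposal is correct, and its skeleton is the same as the paper's: work with $\tau=\sigma^{-2}$, read its Gamma posterior off \eqref{posterior sigma}, establish $\E(\tau|\bY)=\sigma_0^{-2}+O_{\mathbb{P}_0}(n^{-1/2})$ and $\var(\tau|\bY)=O_{\mathbb{P}_0}(n^{-1})$, and conclude by Chebyshev under the posterior. Where you genuinely diverge is in handling the scale parameter $\beta_n=b_2+\tfrac12\bY^{\mathrm T}(\bm{I}_n-\bm{H}_n)\bY$: the paper substitutes $\bY=\bX\btheta^0+\bm{\varepsilon}$ and expands $n^{-1}\bY^{\mathrm T}(\bm{I}_n-\bm{H}_n)\bY$ directly into quadratic forms in $\bC_n$, $\bm{\delta}_n=n^{-1}\bX^{\mathrm T}\bm{\varepsilon}$ and $\|\bm{\varepsilon}\|^2/n$, collapsing everything to $\sigma_0^2+O_{\mathbb{P}_0}(n^{-1/2})$, whereas you write $\bm{I}_n-\bm{H}_n=(\bm{I}_n-\bm{H}_n)^2+(\bm{H}_n-\bm{H}_n^2)$, recognize the first piece as the ridge residual sum of squares whose $\sqrt{n}$-consistency is recorded in Remark~\ref{root-n-consistency} (no circularity, since that remark does not use the lemma), and dispose of the non-idempotency correction via the spectral bound $\|\bm{H}_n-\bm{H}_n^2\|=O(a_n/n)$ combined with $a_n/\sqrt{n}\to0$ and $\|\bY\|^2=O_{\mathbb{P}_0}(n)$. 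Your decomposition makes the role of the hyperparameter $a_n$ more transparent and reuses an already-stated fact, at the cost of leaning on the remark; the paper's direct expansion is self-contained but terser. Both routes rest on the same implicit moment requirement, namely $\|\bm{\varepsilon}\|^2/n=\sigma_0^2+O_{\mathbb{P}_0}(n^{-1/2})$ (finite fourth moments of the errors), so this is not a gap relative to the paper. You also make explicit two steps the paper leaves implicit: the local-Lipschitz transfer of the contraction from $\tau$ to $\sigma$, and the bounded-convergence argument turning $o_{\mathbb{P}_0}(1)$ posterior probabilities into convergence of $\mathbb{E}_0\Pi(|\sigma-\sigma_0|>M_nn^{-1/2}|\bY)$.
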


\begin{proof}
It suffices to prove that the posterior for $\tau=\sigma^{-2}$ concentrates in $n^{-1/2}$-sized neighborhoods of $\tau_0=\sigma_0^{-2}$, which we verify from  \eqref{posterior sigma} using Chebyshev's inequality. We show that $\E(\tau|\bY)=\tau_0+O_p(n^{-1/2})$ and $\mathrm{Var}(\tau|\bY)=O_p(n^{-1})$. 
Let $\bm{\delta}_n=n^{-1}\bX^{\mathrm{T}}\bm{\varepsilon}$. Then, we can write $ (b_1+n/2)[\E(\tau|\bY)]^{-1}/(n/2)-2b_2/n$ as 
\begin{align*}
    (\btheta^0)^{\mathrm{T}} \bC_n \btheta^0+2\bm{\delta}_n^{\mathrm{T}} \btheta^0+ \|\bm{\varepsilon}\|^2/n -(\bC_n\btheta^0+\bm{\delta}_n)^{\mathrm{T}} (\bC_n +n^{-1}a_n \bm{I}_p)^{-1} (\bC_n\btheta^0+\bm{\delta}_n).
\end{align*}
This reduces to 
\begin{align*}
\sigma_0^2 &+O_p(n^{-1/2}) -(\btheta^0)^{\mathrm{T}} (\bC_n+ o(n^{-1/2})) 
    \btheta^0\\
    &\quad -2 \bm{\delta}_n^{\mathrm{T}} (\bm{I}_p+o(n^{-1/2}))\btheta^0- \bm{\delta}_n^{\mathrm{T}} (\bC^{-1} +o(n^{-1/2})) \bm{\delta}_n, 
\end{align*}
which is $\sigma_0^2 +O_p(n^{-1/2})$. This implies the first assertion. 

The second assertion follows because $ \bY^{\mathrm{T}} [\bm{I}_n-\bX(\bX^{\mathrm{T}}\bX+a_n \bm{I}_p)^{-1} \bX] \bY$ grows at the rate $n$ in $\mathbb{P}_0$-probability. 
\end{proof}

\begin{remark}\rm 
\label{root-n-consistency}
It is well-known that the least square estimator $\hat\btheta$ is $\sqrt{n}$-consistent for $\btheta$, and the variance estimator $\hat\sigma^2=n^{-1}\|\bY-\bX \hat\btheta\|^2$ is $\sqrt{n}$-consistent for $\sigma^2$. We can write $\hat\btheta^{\mathrm{R}}=(\bC_n+n^{-1}a_n \bm{I}_p)^{-1} \bC_n \hat\btheta=(\bm{I}_p+\bm{o}(n^{-1/2}))\hat\btheta$. Hence $\|\hat\btheta^{\mathrm{R}}-\hat\btheta\|=o_p(n^{-1/2})$, and in particular, $\hat\btheta^{\mathrm{R}}$ is $\sqrt{n}$-consistent for $\btheta$. Then it follows from the Cauchy-Schwarz inequality and Assumption~\ref{predictor} that 
$$n^{-1}\|\bY-\bX \hat\btheta^{\mathrm{R}}\|^2=n^{-1}[\|\bY-\bX \hat\btheta\|^2-2(\bY-\bX \hat\btheta)^{\mathrm{T}} \bX (\hat\btheta^{\mathrm{R}}-\hat\btheta)+\| \bX (\hat\btheta^{\mathrm{R}}-\hat\btheta)\|^2]$$   
is also $\sqrt{n}$-consistent for $\sigma^2$. 
\end{remark}

\begin{lemma}\label{pos_prob_at_0}
    Suppose $\theta_j^0 = 0$ for $j = s_0+1,\dots,p$. Then for any $\Delta$,  almost surely $$\P(\bm{T}^*_{S_0^c} =\bm{0}_{S_0^c})|\Delta) > 0.$$
\end{lemma}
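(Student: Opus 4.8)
\noindent The plan is to characterize the event $\{\bm{T}^*_{S_0^c}=\bm{0}_{S_0^c}\}$ through the first-order (subgradient) optimality conditions for the convex program \eqref{projection_lim_rv_small_p} defining $\bm{T}^*$, and then to recognize that this event is exactly the event that a non-degenerate Gaussian vector lands in a cube with nonempty interior. When $s_0=p$ the statement is vacuous, so assume $s_0<p$; since for $\lambda_0=0$ the minimizer reduces to $\bm{T}^*=\bm{W}^*$, which has a Lebesgue density, the assertion is meaningful only for $\lambda_0>0$, which I assume henceforth.

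First I would introduce block notation adapted to the partition of $\{1,\dots,p\}$ into $S_0$ and $S_0^c$: write $\bm{t}=(\bm{t}_1,\bm{t}_2)$, $\bm{W}^*=(\bm{W}_1^*,\bm{W}_2^*)$, let $\bC_{11},\bC_{12},\bC_{21},\bC_{22}$ be the corresponding blocks of $\bC$, and set $\bm{s}_1=(\mathrm{sign}(\theta^0_j))_{j\in S_0}$. Since $\theta^0_j=0$ exactly for $j\in S_0^c$, the objective in \eqref{projection_lim_rv_small_p} is $Q(\bm{t})=\bm{t}^{\mathrm T}\bC\bm{t}-2\bm{t}^{\mathrm T}\bC\bm{W}^*+\lambda_0\bigl(\bm{s}_1^{\mathrm T}\bm{t}_1+\|\bm{t}_2\|_1\bigr)$, which is strictly convex because $\bC$ is positive definite; hence it has a unique minimizer, and $(\bm{t}_1,\bm{0})$ is that minimizer if and only if $\bm{0}\in\partial Q(\bm{t}_1,\bm{0})$. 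The $S_0$-block of this inclusion is a single linear equation (the term $\bm{s}_1^{\mathrm T}\bm{t}_1$ being affine), which forces $\bm{t}_1=\bm{W}_1^*+\bC_{11}^{-1}\bC_{12}\bm{W}_2^*-\tfrac{\lambda_0}{2}\bC_{11}^{-1}\bm{s}_1$; substituting this into the $S_0^c$-block, where the $\bm{W}_1^*$ contributions cancel, the remaining subgradient condition on $\|\bm{t}_2\|_1$ becomes $\|\bm{g}\|_\infty\le1$ with $\bm{g}=\tfrac{2}{\lambda_0}\bC_{22\cdot1}\bm{W}_2^*+\bC_{21}\bC_{11}^{-1}\bm{s}_1$, where $\bC_{22\cdot1}=\bC_{22}-\bC_{21}\bC_{11}^{-1}\bC_{12}$ is the Schur complement. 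Thus $\{\bm{T}^*_{S_0^c}=\bm{0}_{S_0^c}\}=\{\bm{g}\in[-1,1]^{p-s_0}\}$, the ``only if'' direction using necessity of $\bm{0}\in\partial Q$ at the minimizer and the ``if'' direction using strict convexity to identify the constructed point as the unique minimizer.

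It then remains to show $\mathrm{P}\bigl(\bm{g}\in[-1,1]^{p-s_0}\mid\Delta\bigr)>0$ for every $\Delta$. Here I would use that $\bC\succ0$ gives $\bC_{22\cdot1}\succ0$, and that the $S_0^c$-block of $\bC^{-1}$ equals $\bC_{22\cdot1}^{-1}$ by the block-inversion identity; hence, conditionally on $\Delta$, $\bm{W}_2^*$ is Gaussian with covariance $\sigma_0^2\bC_{22\cdot1}^{-1}$, so $\bm{g}\mid\Delta$ is Gaussian with covariance $\tfrac{4\sigma_0^2}{\lambda_0^2}\bC_{22\cdot1}\succ0$ — a non-degenerate Gaussian on $\R^{p-s_0}$ with strictly positive density everywhere. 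Integrating that density over the cube $[-1,1]^{p-s_0}$, which has positive Lebesgue measure, yields a strictly positive number for every $\Delta$, which is more than the claimed almost-sure positivity.

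I expect the only delicate part to be the optimality-condition bookkeeping: writing $\partial Q$ in block form, verifying the set equality with the cube event in both directions, and checking explicitly that $\bm{W}_1^*$ drops out of $\bm{g}$; together with the observation that the Schur complement $\bC_{22\cdot1}$ is nonsingular, so that $\bm{g}\mid\Delta$ is genuinely non-degenerate. The remaining steps are routine.
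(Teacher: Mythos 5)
Your proof is correct and follows essentially the same route as the paper: it characterizes the event $\{\bm{T}^*_{S_0^c}=\bm{0}\}$ via the KKT/subgradient conditions, eliminates the $S_0$-block, and reduces the question to a non-degenerate Gaussian (given $\bm\Delta$) landing in a box of positive Lebesgue measure. You in fact carry out the final step more explicitly than the paper (computing the conditional covariance $\tfrac{4\sigma_0^2}{\lambda_0^2}\bC_{22\cdot1}\succ 0$ via the Schur complement, and rightly noting that the claim requires $\lambda_0>0$, and that the paper's display should read $\bC_{21}\bC_{11}^{-1}$), but the underlying argument is the same.
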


\begin{proof}
    The minimizer of \eqref{limit distribution} can be expressed as $\bm{T}^* = \begin{bmatrix}
        \bm{T}_{S_0}^*\\
        \bm{T}_{S_0^c}^*
    \end{bmatrix}$.
    Write $$\bC =  \begin{bmatrix}
\bC_{11} & \bC_{12} \\
\bC_{21} & \bC_{22} 
\end{bmatrix} \textnormal{ and } \bm{W}^* = \begin{bmatrix}
        \bm{W}_{1}^*\\
        \bm{W}_{2}^*
    \end{bmatrix}.$$
    If $\bm{T}_{S_0^c}^* = \textbf{0}_{S_0^c} \in \R^{p-s_0}$, then, the KKT conditions of \eqref{limit distribution} can be written as 
    \begin{align}
    \label{KKT1}
        \bC_{11}\bm{T}_{S_0}^* - (\bC \bm{W}^*)_{1} = -\frac{\lambda_0}{2} \begin{bmatrix}
            \text{sign}(\theta_1^0)\\
            \vdots\\
            \text{sign}(\theta_{s_0}^0)
        \end{bmatrix}
    \end{align}
    and \begin{align}
    \label{KKT2}
        -\frac{\lambda_0}{2} \bm{1} \leq \bC_{21}\bm{T}_{S_0}^* - (\bC \bm{W}^*)_2 \leq \frac{\lambda_0}{2},
    \end{align} where $(\bC \bm{W}^*)_{1} = \bC_{11}\bm{W}^*_1 + \bC_{12}\bm{W}^*_2$, $(\bC \bm{W}^*)_{2} = \bC_{21}\bm{W}^*_1 + \bC_{22}\bm{W}^*_2$ and $\bm{1}$ is a $(p-s_0)$ dimensional vector of 1's. Then, solving for $\bm{T}_{S_0}^*$ from \eqref{KKT1} and substituting in \eqref{KKT2}, we have $$-\frac{\lambda_0}{2} \bm{1} \leq \bC_{21}\bC_{11}\big((\bC\bm{W}^*)_1 - \frac{\lambda_0}{2}\begin{bmatrix}
            \text{sign}(\theta_1^0)\\
            \vdots\\
            \text{sign}(\theta_{s_0}^0)
        \end{bmatrix}\big) - (\bC \bm{W}^*)_2 \leq \frac{\lambda_0}{2} \bm{1}.$$ Now, recalling that $\bm{W}^*|\bm\Delta \sim \normal(\sigma_0 \bm{C}^{-1/2}\bm{\Delta}, \sigma_0^2 \bm{C}^{-1})$, we have the result.
\end{proof}

\begin{proof}[Proof of \Cref{limit distribution}]
In view of Lemma~\ref{error variance}, it suffices to restrict $\sigma$ to a shrinking neighborhood $\mathcal{U}_n$ of $\sigma_0$ and derive the result by conditioning on $\sigma$ to a value $\sigma_n$, uniformly in $\sigma_n\in \mathcal{U}_n$. 
We can write $\btheta^* = \argmin_{\bm{u}} M_n(\bm{u})$, where 
\begin{align*}
    M_n(\bm{u}) = \argmin_{\bm{u}} \{(\bm{u} - \btheta)^\mathrm{T} \bX^\mathrm{T}\bX (\bm{u} - \btheta) + {\lambda_0}{\sqrt{n}} \|\bm{u}\|_1\}
\end{align*}
Then, defining $\bm{t} = \sqrt{n}(\bm{u} - \btheta^0)$, we have 
$$\bm{T}^*_n = \sqrt{n}({\boldsymbol{\theta}}^* - \btheta^0) = \argmin_{\bm{t}} \big\{ M_n ( \btheta^0 + n^{-1/2}{\bm{t}}) - M_n(\btheta^0) \big\}.$$ 
Because $\btheta|(\bY,\sigma=\sigma_n) \sim \normal(\hat{\btheta}^\mathrm{R}, \sigma_n^2 (\bX^\mathrm{T}\bX + a_n \bm{I}_p)^{-1})$, we can rewrite $\btheta = \hat{\btheta}^\mathrm{R} - \sigma_n (\bX^\mathrm{T}\bX + a_n \bm{I}_p)^{-1/2} \bm{U}_n^*$, where $\bm{U}_n^* \sim \normal_p(\bm{0},\bm{I}_p)$. Then, writing $\bm{\Delta}_n = -\sqrt{n}\bC_n^{1/2}(\hat{\btheta}^\mathrm{R}-\btheta^0)/\sigma_n$, we have 
\begin{eqnarray*}
    \lefteqn{M_n ( \btheta^0 + n^{-1/2} {\bm{t}}) - M_n(\btheta^0) } \nonumber\\
    & = &(\btheta^0 + n^{-1/2} {\bm{t}} - \btheta)^\mathrm{T} \bX^\mathrm{T} \bX(\btheta^0 + n^{-1/2} {\bm{t}} - \btheta) + \lambda_0 \sqrt{n} \|\btheta^0 + n^{-1/2}{\bm{t}}\|_1 \nonumber \\
    &  & - (\btheta^0 - \btheta)^\mathrm{T} \bX^\mathrm{T} \bX(\btheta^0 - \btheta) - \lambda_0 \sqrt{n} \|\btheta^0\|_1\nonumber\\
    & = &\bm{t}^\mathrm{T} \bC_n \bm{t} + 2 \bm{t}^\mathrm{T} \bC_n \sqrt{n}(\btheta - \btheta^0) + \lambda_0 \sqrt{n} ( \|\btheta^0 + n^{-1/2} {\bm{t}}\|_1 - \|\btheta^0\|_1)\nonumber\\
    & = &\bm{t}^\mathrm{T} \bC_n \bm{t} + 2 \bm{t}^\mathrm{T} \bC_n \sqrt{n}(\hat{\btheta}^\mathrm{R} - \sigma_n (n \bC_n + a_n\bm{I})^{-1/2}\bm{U}_n^* - \btheta^0) \nonumber \\ 
    &  & + \lambda_0 \sqrt{n} ( \|\btheta^0 + n^{-1/2} {\bm{t}}\|_1 - \|\btheta^0\|_1)\nonumber\\
    & = &\bm{t}^\mathrm{T}\bC_n \bm{t} - 2 \sigma_n \bm{t}^\mathrm{T} \bC_n^{1/2} \bm{\Delta}_n - 2 \sigma_n \bm{t}^\mathrm{T} \bC_n (n \bC_n + a_n\bm{I})^{-1/2} \bm{U}_n^* \nonumber \\
    & & + \lambda_0 \sqrt{n} ( \|\btheta^0 +n^{-1/2} {\bm{t}}\|_1 - \|\btheta^0\|_1). \nonumber
\end{eqnarray*}
In view  of \Cref{ridge CLT}, $\bm{\Delta}_n$ converges weakly to $\bm{\Delta}\sim \normal_p(\bm{0},\bm{I}_p)$. Writing $\bm{U}^*$ for a $\normal_p(\bm{0},\bm{I}_p)$ variable independent of $\bm{\Delta}$, it follows that the weak limit of  $M_n ( \btheta^0 + n^{-1/2}{\bm{v}}) - M_n(\btheta^0)$, conditional on the data and $\sigma=\sigma_n$, is the stochastic process 
\begin{align} 
     \bm{t}^\mathrm{T}\bC\bm{t} - 2 \sigma_0 \bm{t}^\mathrm{T}\bC^{1/2} (\bm{U}^*+\bm{\Delta}) + \lambda_0  \big[ \sum_{j=1}^{s_0} t_{j}\mathrm{sign}(\theta^0_{j}) + \sum_{j=s_0+1}^{p} |t_j| \big].
     \label{process limit}
\end{align}
The limiting process has a unique separated maximum. Hence, by the Argmax Theorem (Theorem~3.2.2 of \cite{van10weak}),  the conditional distribution of $\bm{T}_n^*$ given the data converges to the distribution of 
$$\bm{T}^*= \argmin_{\bm{t}} \{\bm{t}^\mathrm{T}\bC\bm{t} - 2 \sigma_0 \bm{t}^\mathrm{T}\bC^{1/2}(\bm{U}^*+\bm{\Delta}) + \lambda_0  \big[ \sum_{j=1}^{s_0} t_{j}\mathrm{sign}(\theta^0_{j}) + \sum_{j=s_0+1}^{p} |t_j| \big]\}$$ 
given $\bm{\Delta}$, in the space of probability measures $\mathfrak{M}(\mathbb{R}^p)$ under the weak topology. Writing $\bm{W}^*=\sigma_0 \bC^{-1/2} (\bm{U}^*+\bm{\Delta})$, the distributional representation in \eqref{projection_lim_rv_small_p} is obtained. 

Let $\Xi_n$ stand for the random probability measure $\mathcal{L}(\bm{T}_n^* \in \cdot|\bY, \sigma=\sigma_n)$. 
Writing $\Xi$ for the random probability measure $\mathcal{L}(\bm{T}\in \cdot |\bm\Delta)$, we have $\Xi_n \rightsquigarrow \Xi$ in the space  $\mathfrak{M}(\R^p)$ with respect to the topology of weak convergence.

We also note that the conditional distribution of the process $M_n ( \btheta^0 + n^{-1/2} {\bm{t}}) - M_n(\btheta^0)$ given the data and $\sigma=\sigma_n$ depends on the data through $\bm{\Delta}_n$, which is a continuous function of $\bX^{\mathrm{T}}\bm{\varepsilon}$. Also, $\sqrt{n}(\btheta^{\mathrm{L}}-\btheta_0)$ is a continuous function of $\bX^{\mathrm{T}}\bm{\varepsilon}$ too. Further $n^{-1/2} \bX^{\mathrm{T}}\bm{\varepsilon}\rightsquigarrow \bm{\Delta}$ on $\mathbb{R}^p$. 
Hence, combining with Theorem~\ref{lasso_lim_dist}, we can conclude that $(\Xi_n, \bm{\xi}_n)\rightsquigarrow (\Xi, \bm{\xi})$; that is, the joint weak convergence holds in the space $\mathfrak{M}(\mathbb{R}^p)\times \mathbb{R}^p$ with respect to the product of the weak topology and the Euclidean topology. 
\end{proof}

\begin{proof}[Proof of \Cref{coverage}]
Since a norm is a continuous functional, it follows from \eqref{centered joint distr} that 
\begin{align}
\label{joint_limit_for_coverage_result}
     \big(\mathcal{L}(\|\bm{T}_n^* - \hat{\bm\xi}_n\|_{\mathcal{K}} \in \cdot|\bY), \|\hat{\bm\xi}_n\|_{\mathcal{K}}\big) \rightsquigarrow \big(\mathcal{L}(\|\bm{T} -\bm\xi\|_{\mathcal{K}} \in \cdot|\bm{\Delta}), \|\xi\|_{\mathcal{K}} \big) 
\end{align}  
in the space $\mathfrak{M}(\R^p) \times \R^p$.  
To obtain the limiting coverage of the obtained credible region, note that  
$\{\btheta^0 \in B_r\}=\{\|\sqrt{n}(\btheta^0 - \hat{\btheta}^\mathrm{L}) \|_{\mathcal{K}} \leq r_{1-\alpha,n}\}$, 
and 
\begin{eqnarray*}
\lefteqn{\{\Pi(\|\sqrt{n}(\btheta^* - \hat{\btheta}^\mathrm{L})\|_{\mathcal{K}} \leq \|\sqrt{n}(\btheta^0 - \hat{\btheta}^\mathrm{L})\|_{\mathcal{K}}|\bY)< 1 - \alpha\} }\\ 
&& \subset \{\btheta^0 \in B_r\}\\
&& \subset \{\Pi(\|\sqrt{n}(\btheta^* - \hat{\btheta}^\mathrm{L})\|_{\mathcal{K}} \leq \|\sqrt{n}(\btheta^0 - \hat{\btheta}^\mathrm{L})\|_{\mathcal{K}}|\bY)\le 1 - \alpha\}. 
\end{eqnarray*}
The assertion is now immediate.  
\end{proof}

\begin{proof}[Proof of \Cref{coverage of credible interval}]
    Let the limit $\bC$ of $n^{-1}\bX^\mathrm{T}\bX$ be a block-diagonal matrix with blocks $\{j\}$ and $[-j]$. As argued in Remark~\ref{scaling}, it suffices to prove the result under the additional assumption that $c_j =1$ and $\sigma_0 =1$. Then $\xi_j$ reduces to 
    $\argmin\{ v_j^2 - 2 \Delta_jv_j + \lambda_0 [  v_j \text{sign}(\theta^0_j) + \mathbbm{1}(\theta_j^0=0) |v_j|]\}$ 
    For a non-zero-component, the minimizer of $v_j^2 - 2\Delta_jv_j + \lambda_0 v_j \text{sign}(\theta^0_j)$ is  
    \begin{align}
    \label{non-zero}
    \xi_j = \begin{cases}
    \Delta_j - {\lambda_0}/{2}, & \text{if } \theta_j^0 > 0,\\
    \Delta_j +  {\lambda_0}/{2}, & \text{if } \theta_j^0 < 0,
\end{cases}
\end{align} 
while for a zero-component, the minimizer of $v_j^2 - 2\Delta_jv_j + \lambda_0 |v_j|$ is  \begin{align}
\label{zero}
    \xi_j = \begin{cases}
    \Delta_j - {\lambda_0}/{2}, & \text{if } \Delta_j >  {\lambda_0}/{2},\\
    \Delta_j +  {\lambda_0}/{2}, & \text{if } \Delta_j < - {\lambda_0}/{2},\\
    0, & \text{if } |\Delta_j| \leq  {\lambda_0}/{2}.
\end{cases}
\end{align} 
In both cases, we note that $\xi_j$ is a continuous function of $\Delta_j$. In the first case, $\xi_j$ is a continuous random variable, while in the second case, $\xi_j$ has a point mass at $0$. 
Proceeding in the same way, we have for a  non-zero component 
\begin{align}
\label{T_j_star_non_zero}
    T_j^*  = \argmin \{t_j^2 - 2 W_j^* t_j + \lambda_0 t_j \textnormal{sign}(\theta_j^0)\} = \begin{cases}
        W_j^* -  {\lambda_0}/{2}, & \text{if } \theta_j^0 > 0,\\
        W_j^* +  {\lambda_0}/{2}, & \text{if } \theta_j^0 < 0.
    \end{cases}
\end{align}
and for a zero-component 
\begin{align}
\label{T_j_star_zero}
    T_j^*  = \argmin \{t_j^2 - 2 W_j^* t_j + \lambda_0 |t_j|\} 
    = \begin{cases}
        W_j^* -  {\lambda_0}/{2}, & \text{if } W_j^* >  {\lambda_0}/{2},\\
        W_j^* +  {\lambda_0}/{2}, & \text{if } W_j^* < - {\lambda_0}/{2},\\
        0, & \text{if } |W_j^*| \leq  {\lambda_0}/{2},
    \end{cases}
\end{align}
where $W_j^*|\Delta_j\sim \normal (\Delta_j,1)$. 
Writing $Z_j = W_j^* - \Delta_j$, and noting that $Z_j|\Delta_j \sim \normal(0,1)$, 
we obtain that for $\theta_j^0 > 0$, 
\begin{align} 
\label{positive case}
\P(|T_j^* - \xi_j| \leq |\xi_j| \big| \Delta_j)=\P(|Z_j| \leq |\xi_j| \big| \Delta_j )=
2 \Phi(|\xi_j|) - 1=h_+(\lambda_0,\Delta_j),
\end{align}
because $T_j^*-\xi_j=(W_j^*-\lambda_0/2)-(\Delta_j-\lambda_0/2)=Z_j$ in this case. 

Similarly, for $\theta_j<0$, $T_j^*-\xi_j=(W_j^*+\lambda_0/2)-(\Delta_j+\lambda_0/2)=Z_j$ again, and hence 
\begin{align}
\label{negative case}
\P(|T_j^* - \xi_j| \leq |\xi_j| \big| \Delta_j)=\P(|Z_j| \leq |\xi_j| \big| \Delta_j )=
2 \Phi(|\xi_j|) - 1=h_-(\lambda_0,\Delta_j).
\end{align}
Since $2 \Phi(|\xi_j|) - 1 $ is a continuous random variable, the coverage probability $\prob_0 (\theta_j^0\in [\hat\theta_j^{\mathrm{L}}-r_{1-\alpha,j},\hat\theta_j^{\mathrm{L}}+r_{1-\alpha,j}])$ converges to 
\begin{align*}
    \P ( 2 \Phi(|\xi_j|) - 1 \leq 1 - \alpha)
    =  \P(|\xi_j| \leq z_{\alpha/2})
    = \Phi(\lambda_0/2 + z_{\alpha/2}) - \Phi(\lambda_0/2 - z_{\alpha/2})
\end{align*}
whenever $\theta^0_j\ne 0$.

To compute $\P(|T_j^* - \xi_j| \leq |\xi_j| \big| \Delta_j)$ for $\theta_j^0 = 0$, we consider the three cases separately. For $\Delta_j > \lambda_0/2$, we have $\xi_j=\Delta_j-\lambda_0/2>0$ so that from \eqref{T_j_star_zero} and \eqref{zero}, 
\begin{align*}
    \P(|T_j^* - \xi_j| \leq |\xi_j| \big| \Delta_j)
    & = \P(|W_j^* - \lambda_0/2 - \xi_j| \leq \xi_j, W_j^* > \lambda_0/2 \big| \Delta_j) \\
    & \quad + \P(|W_j^* + \lambda_0/2 - \xi_j| \leq \xi_j, W_j^* < -\lambda_0/2 \big| \Delta_j) \\
    & \quad + \P(|0 - \xi_j| \leq \xi_j, |W_j^*| \leq \lambda_0/2 \big| \Delta_j)\\
    & = \P(|Z_j| \leq \Delta_j - \lambda_0/2, Z_j > -\Delta_j + \lambda_0/2 \big| \Delta_j)\\
    & \quad + \P(|Z_j + \lambda_0| \leq \Delta_j - \lambda_0/2, Z_j < -\Delta_j - \lambda_0/2 \big| \Delta_j)\\
    & \quad + \P(|-\Delta_j + \lambda_0/2| \leq \Delta_j - \lambda_0/2, |Z_j + \Delta_j| \leq \lambda_0/2 \big| \Delta_j)\\
    & = \P(-\Delta_j + \lambda_0/2 \leq Z_j \leq \Delta_j - \lambda_0/2| \Delta_j) + 0 \\
    & \quad + \P( - \Delta_j - \lambda_0/2 \leq Z_j \leq -\Delta_j + \lambda_0/2|\Delta_j)\\
    & = \Phi(\Delta_j - \lambda_0/2) - \Phi(-\Delta_j - \lambda_0/2)
\end{align*}
For the case where $\Delta_j < -\lambda_0/2$, we have $|\xi_j| = -\xi_j$, so that 
\begin{align*}
   \P(|T_j^* - \xi_j| \leq |\xi_j| \big| \Delta_j)
    & = \P(|Z_j - \lambda_0| \leq -\Delta_j - \lambda_0/2, Z_j > -\Delta_j + \lambda_0/2\big| \Delta_j)\\
    & \quad + \P(|Z_j| \leq -\Delta_j - \lambda_0/2, Z_j < - \Delta_j - \lambda_0/2\big|\Delta_j)\\
    & \quad + \P(|\Delta_j + \lambda_0/2| \leq |\Delta_j + \lambda_0/2|, |Z_j + \Delta_j| \leq \lambda_0/2\big| \Delta_j)\\
    & = \Phi(-\Delta_j + \lambda_0/2) - \Phi(\Delta_j + \lambda_0/2).
\end{align*}
Finally, when $|\Delta_j| \leq \lambda_0/2$, $\xi_j=0$ so that 
\begin{align*}
   \P(|T_j^* - \xi_j| \leq |\xi_j| \big| \Delta_j)
    &= \P(T_j^*=0\big|\Delta_j)\\ 
    &=  \P(|Z_j+\Delta_j|\le \lambda_0/2\big|\Delta_j)\\
    & = \Phi(\Delta_j + \lambda_0/2) - \Phi(\Delta_j - \lambda_0/2). 
\end{align*}
Thus in all cases for $\theta_j^0=0$, $ \P(|T_j^* - \xi_j| \leq |\xi_j| \big| \Delta_j)=h_0(\lambda_0,\Delta_j)$, which is a continuous random variable because $\zeta \mapsto h_0(\lambda_0,\zeta)$ is continuous and not flat, and $\Delta_j\sim \normal(0,1)$ is continuous. Therefore, it follows that 
\begin{align*} 
\prob_0 (\theta_j^0\in [\hat\theta_j^{\mathrm{L}}-r_{1-\alpha,j},\hat\theta_j^{\mathrm{L}}+r_{1-\alpha,j}])\to 
\P (h_0(\lambda_0,\Delta_j)\le 1-\alpha)=\psi_0(\alpha,\lambda_0),
\end{align*}
when $\theta_j^0=0$. 
\end{proof}

\end{appendix}

\section{Funding}
Research is partially supported in part by ARO grant number 76643MA 2020-0945.

\bibliography{refer}
\end{document}